\documentclass[aps,pra,twocolumn]{revtex4-2}%
\usepackage{graphicx}
\usepackage{amsthm}
\usepackage{physics}
\usepackage{tikz}
\usetikzlibrary{arrows.meta, positioning}
\usepackage{amsmath}
\usepackage{bm}
\usepackage{hyperref}
\usepackage{makecell}
\usepackage{mathrsfs}
\usepackage{ragged2e}
\usepackage{hyperref}

\usepackage{amsfonts}
\usepackage{caption}
\usepackage{subcaption}
\usepackage{amssymb}
\usepackage{cleveref}
\usepackage{booktabs}

\hypersetup{
    colorlinks=true,
    linkcolor=blue,
    citecolor=blue,
    urlcolor=blue
}

\def\W{\mathbb{W}}

\newcommand{\kb}[2]{\ketbra{#1}{#2}}

\newtheorem{thm}{Theorem}
\newtheorem{prop}{Proposition}

\newtheorem{eg}{Example}

\begin{document}
	
\preprint{APS/123-QED}
	
\title{Imaginarity witnessing enhancement via spectral norms of witnesses}
\author{Yuhang Xie}
\affiliation{School of Mathematics and Statistics, Henan University, Kaifeng, 475004, China}
\author{Yanjun Chu}
\email{chuyj@henu.edu.cn}
\affiliation{School of Mathematics and Statistics, Henan University, Kaifeng, 475004, China}
\author{Yushan Ding}
\affiliation{School of Mathematics and Statistics, Henan University, Kaifeng, 475004, China}
\author{Shao-Ming Fei}
\email{feishm@cnu.edu.cn}
\affiliation{School of Mathematics Sciences, Capital Normal University, Beijing, 100048, China}

\begin{abstract}
Quantum imaginarity is an essential physical resource that underpins key functionalities of modern quantum technologies. We improve imaginarity witnessing via the prior knowledge of imaginarity‑witness operators. To this end, we derive a rigorous upper bound on the maximum expectation value of an imaginarity witness operator over the set of all free (real) quantum states, which is given by the spectral norm of the real component of the corresponding witness operator. We demonstrate via detailed examples that this bound substantially improves imaginarity detection. We further classify all imaginarity‑witness operators into four distinct families based on this bound. For these four witness classes, we perform a comprehensive analysis of their completeness and finite completeness, the joint detection of shared imaginary quantum states by different witnesses, and the conditions for distinct witnesses to identify identical imaginary states. Our results advance the fundamental understanding of imaginarity detection and offer useful insights for both theoretical studies and experimental implementations of quantum imaginarity.
\end{abstract}

\maketitle

\section{Introduction}

Imaginary numbers lie at the heart of quantum theory, providing the indispensable mathematical framework for describing quantum states and their dynamical evolution, and constitute a ubiquitous tool throughout both classical and quantum physics. A series of recent theoretical proposals and experimental verifications have firmly established that complex Hilbert spaces are unavoidable for a consistent formulation of quantum mechanics \cite{renou2021quantum,li2022testing,chen2022ruling,wu2022experimental}. The imaginary component inherent to complex quantum descriptions further gives rise to quantum imaginarity, a distinct quantum resource whose practical and foundational implications have been explored across a wide range of quantum information tasks. Prior work has uncovered key roles of imaginarity in quantum hiding and masking \cite{zhu2021,zhang2021}, multiparameter quantum metrology \cite{miyazaki2022}, quantum machine learning \cite{sajjan2023}, quantum pseudorandomness \cite{haug2025}, output statistics of linear optical setups \cite{jones2023}, Kirkwood–Dirac quasiprobability distributions \cite{budiyono2023108,budiyono2023107,budiyono2023jpa,wagner2024}, weak-value formalism \cite{wagner2023}, nonlocal quantum advantages rooted in imaginarity \cite{weiPRA052202}, and imaginarity-based quantum speed limits \cite{xuanPRA0522002}.

Hickey and Gour pioneered the resource theory of quantum imaginarity in 2018 to systematically characterize the fundamental role of complex numbers in quantum mechanics \cite{hickey2018quantifying}. This theoretical framework offers a rigorous paradigm for the quantitative evaluation of state imaginarity, laying a solid foundation for subsequent investigations of imaginarity as a bona fide quantum resource \cite{chitambar2019}.
The quantitative characterization of imaginarity constitutes a core fundamental problem within resource theory. To date, a variety of imaginarity quantifiers have been established, many of which possess clear operational interpretations. Representative quantifiers include the robustness of imaginarity \cite{hickey2018quantifying,wu2021pra}, imaginarity fidelity \cite{wu2021prl}, and geometric imaginarity \cite{wu2021pra}, together with various other imaginarity quantifiers \cite{XueQIP383,chen220311,XuPLA130024,ZhangQIP361,chen280312,
LiuJPA035302,WuCTP095101,TianPLA130479,ZhangEPJP669,DuPRA022405}. Such interpretations not only facilitate a physically intuitive understanding of imaginarity as a quantum resource, but also clarify the inherent advantages of imaginary quantum states over real ones in diverse quantum protocols. Furthermore, by exploiting the state-channel duality in quantum mechanics, the concept of state imaginarity has been further extended to quantum channels, establishing a generalized framework for channel imaginarity characterization \cite{ChenPLA130129,Wu230316,WangEPJP216,FuQIP140}.
Besides quantitatively characterizing the amount of resource contained in a quantum state of a physical system, revealing the properties of an unknown quantum state in a physical system is also an important problem in quantum information theory. For instance, one may wish to determine whether a quantum state is entangled or separable, coherent or incoherent, as well as imaginary or real. Recently, Ref.~\cite{Chakrabarty00164} developed a framework for detecting imaginarity using moments of the extended Kirkwood-Dirac quasiprobability distribution.
By quantum state tomography, the full information of a given physical system can be determined through a series of quantum measurements. Nevertheless, this is generally an expensive and cost-inefficient procedure that consumes substantial experimental resources. Encouragingly, witness operators offer feasible solutions to these problems using currently accessible techniques, e.g. entanglement witnesses \cite{LewensteinPRA052310,HouPRA052301,HorodeckiPRA865,WangPRA050302R} and coherence witnesses \cite{MaPRA012429,WangQIP181,WangEntropy1136}. Witness operators typically require less experimental setups to unambiguously verify the existence of resource. This concept builds upon the convexity and closedness of the set of free states, by the Hahn-Banach theorem and Riesz representation theorem \cite{functionalanalysis},  there exists a Hermitian operator $W$ that can be used to detect some resource states $\rho$, namely, $\Tr[W\rho]<0$. Geometrically, witness operators correspond to hyperplanes separating such resource states from the set of free states. It is worth noting that the value $\Tr[W\rho]$ obtained from a witness operator can also be employed to define resource measures \cite{EisertNJP46,Ren281}. This unveils that resource witness constitutes a highly effective and physically implementable tool for characterizing quantum states.

In analogue to the entanglement witness and coherence witness, the imaginarity witness was proposed for experimental detection of quantum imaginarity \cite{zhangPLA130135}. Specifically, a Hermitian operator $W$ serves as a valid imaginarity witness if it satisfies two conditions: (i) $\Tr[W\sigma] \geqslant 0$ for all real quantum states $\sigma$; (ii) there exists at least one imaginary state $\rho$ such that $\Tr[W\rho] < 0$. Zhang \textit{et al} also employed imaginarity witnesses to derive explicit lower bounds for the robustness and $\ell_1$-norm of imaginarity, revealing an intrinsic correspondence between imaginarity witnesses and imaginarity quantifiers, highlighting the pivotal role of witness operators in probing imaginarity properties. Furthermore, Ref.~\cite{FernandesPRL190201} utilized a kind of unitary invariants, termed Bargmann invariants, as witnesses for quantum imaginarity. 

Inspired by Ref.~\cite{LiPRA032422}, we aim to obtain a suitable upper bound $S$ for the expectation value of $W$ over all real states. In this way, whenever the measurement outcome violates the inequality $ 0\leqslant \Tr[W\rho] \leqslant S$, we conclude that $\rho$ is a imaginary state. We systematically investigate imaginarity witnesses within the framework of prior knowledge about the spectral norm of the real parts of the witness observables. Through explicit examples, we demonstrate that the bound $S$ indeed enhances the ability of witnesses in detecting imaginary states. We then classify all witness operators with respect to their bounds. This paper is organized as follows. In Sec.~\ref{sect2}, we derive a rigorous upper bound on the expectation value of any Hermitian operator in all real quantum states, and show that this bound substantially enhances the detection capability of witness operators. In Sec.~\ref{sect3}, we categorize all imaginarity witnesses into four families according to the magnitude of the spectral norm. We then systematically investigate the intrinsic properties of these four witness classes, focusing on their completeness and finite completeness, as well as on the conditions under which different witness operators identify the same imaginary states. In Sec.~\ref{sect4}, we briefly compare our results with previous studies on coherence witnessing.  Concluding remarks are presented in Sec.~\ref{sect5}.

\section{Witnessing Quantum Imaginarity}\label{sect2}

Let $\mathscr{H}$ denote a $d$-dimensional Hilbert space equipped with the computational basis $\mathcal{B}=\{\ket{k}\mid k=1,2,\dots,d\}$. Denote $\mathbb{H}$ the set of $d\times d$ Hermitian operators and $\mathcal{D}$ the set of density operators in $\mathscr{H}$. Let $\mathcal{R}$ stand for the set of real quantum states defined with respect to the basis $\mathcal{B}$, namely, the free states in the resource theory of quantum imaginarity. Then the states in the complement set $\mathcal{D}\setminus\mathcal{R}$ are said to be imaginary (resource) states. The real and imaginary parts of a quantum state $\rho$ with respect to the reference basis $\mathcal{B}$ are respectively given by
\begin{equation}\label{Re and Im}
\Re(\rho):=\frac{\rho+\rho^\top}{2},~~ \quad \Im(\rho):=\frac{\rho-\rho^\top}{2\mathbf{i}},
\end{equation}
where $\mathbf{i}=\sqrt{-1}$ is the imaginary unit and $\rho^\top$ denotes the transpose of $\rho$.
We herein define $\mathbb{H}^{\text{Re}}_\geqslant$ as the set of Hermitian operators with positive semi-definite real parts and $\Delta_-$ as the set of Hermitian operators admitting negative eigenvalues. The full set of general imaginarity witnesses is thus given by the intersection $\mathbb{H}^{\text{Re}}_\geqslant\bigcap\Delta_-$. In this manuscript, we adopt $\|X\|_{\infty}=\lim_{p\rightarrow\infty}\|X\|_p$ to denote the spectral norm of an operator $X$, where
\[
\|X\|_p=\left\{\Tr\Bigl[\bigl(\sqrt{X^\dagger X}\bigr)^p\Bigr]\right\}^{1/p}.
\]
The spectral norm of $X$ equals its largest singular value numerically \cite{WildeBook}.

\begin{thm}\label{thmmain}
For any real state $\sigma$, we obtain the upper bound
\[
\mathrm{Tr}[W\sigma]\leqslant \norm{\operatorname{Re}(W)}_\infty,
\]
where $\norm{\operatorname{Re}(W)}_\infty$ denotes the spectral norm of $\operatorname{Re}(W)$.
\end{thm}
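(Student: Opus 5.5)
The plan is to reduce the expectation value to one involving only the real part of the witness, and then apply a standard Hölder-type bound. Fix a real state $\sigma$, so that $\sigma^\top=\sigma$ and $\sigma=\Re(\sigma)$.

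\emph{Step 1: only $\Re(W)$ survives.} Write $W=\Re(W)+\mathbf{i}\,\Im(W)$ using \eqref{Re and Im}. Since $W$ is Hermitian, $W^\top=\overline{W}$, and therefore $\Re(W)$ is real symmetric while $\Im(W)$ is real antisymmetric. For any antisymmetric $A$ and symmetric $\sigma$ we have
\[
\Tr[A\sigma]=\Tr[(A\sigma)^\top]=\Tr[\sigma^\top A^\top]=-\Tr[\sigma A]=-\Tr[A\sigma],
\]
so $\Tr[\Im(W)\sigma]=0$. Equivalently, one can use the identity $\Tr[W\sigma]=\Tr[W^\top\sigma^\top]=\Tr[W^\top\sigma]$ and average the two expressions. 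Either way, $\Tr[W\sigma]=\Tr[\Re(W)\sigma]$.

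\emph{Step 2: Hölder bound.} Apply the matrix Hölder inequality with the pair $(\infty,1)$:
\[
\Tr[\Re(W)\sigma]\leqslant \|\Re(W)\|_\infty\,\|\sigma\|_1=\|\Re(W)\|_\infty,
\]
where $\|\sigma\|_1=\Tr\sigma=1$ because $\sigma\geqslant 0$. If one prefers to avoid Hölder, there is a direct route. Diagonalize $\sigma=\sum_j p_j\ket{\psi_j}\bra{\psi_j}$ with real eigenvectors, which is possible since $\sigma$ is real symmetric. Each term satisfies $\bra{\psi_j}\Re(W)\ket{\psi_j}\leqslant\lambda_{\max}(\Re(W))\leqslant\|\Re(W)\|_\infty$, and the weights $p_j$ sum to one, giving the same bound.

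\emph{Main obstacle.} There is essentially none. The only point needing care is Step 1: verifying that the transpose-based definitions give Hermitian $\Re(W)$ and $\Im(W)$, and that the imaginary part contributes nothing on real states. It is also worth remarking that the bound can be sharpened to $\lambda_{\max}(\Re(W))$. This is attained on real states, because $\Re(W)$ is real symmetric and so has a real top eigenvector. For witnesses, where $\Re(W)\geqslant 0$ by the assumption $W\in\mathbb{H}^{\text{Re}}_\geqslant$, this largest eigenvalue coincides with the spectral norm. Hence the bound in Theorem~\ref{thmmain} is tight for every imaginarity witness.
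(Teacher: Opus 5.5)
Your proposal is correct and is essentially the paper's proof: the paper also uses $\Tr[W\sigma]=\Tr[W^\top\sigma]$ for real $\sigma$ to replace $W$ by $\Re(W)$, and then applies H\"older's inequality with $\|\sigma\|_1=1$. Your eigendecomposition route and your tightness remark (via a real top eigenvector of $\Re(W)\geqslant 0$) are valid extras. One small slip in your closing paragraph: $\Im(W)$ as defined via the transpose is real antisymmetric, hence anti-Hermitian rather than Hermitian; it is $\mathbf{i}\,\Im(W)$ that is Hermitian. This does not affect the argument.
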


The proof of Theorem 1 is straightforward. By the identity \(\mathrm{Tr}[W\sigma]=\mathrm{Tr}[W^\top\sigma]\), for any real state \(\sigma\), we have 
\begin{align*}
\mathrm{Tr}[W\sigma]&=\frac{\mathrm{Tr}[W\sigma]+\mathrm{Tr}[W^\top\sigma]}{2}= \mathrm{Tr}[\Re(W)\sigma] \\
&\leqslant\norm{\Re(W)}_\infty\norm{\sigma}_1 = \norm{\Re(W)}_\infty,
\end{align*}
where we have used the H\"older inequality \(|\mathrm{Tr}[A^\dagger B]| \leqslant \|A\|_p\|B\|_q\) for \(p,q>0\) satisfying \(1/p + 1/q = 1\). 

Theorem 1 says that any state \(\rho\) obeying \(\mathrm{Tr}[W\rho]>\norm{\Re(W)}_\infty\) must be an imaginary state. This implies that the knowledge of the spectral norm of the real part of the witness operator broadens the set of detectable imaginary states, as visualized in FIG.~\ref{fig1}. Specifically, imaginary states situated above the solid black curve are distinguishable using the traditional criterion \(\mathrm{Tr}[W\rho]<0\), while certain imaginary states lying below the black solid line can be detected through our new criterion \(\mathrm{Tr}[W\rho]>\norm{\Re(W)}_\infty\).

\begin{figure}[htbp]
    \centering
    \includegraphics[width=0.5\textwidth,trim=25 60 25 25, clip]{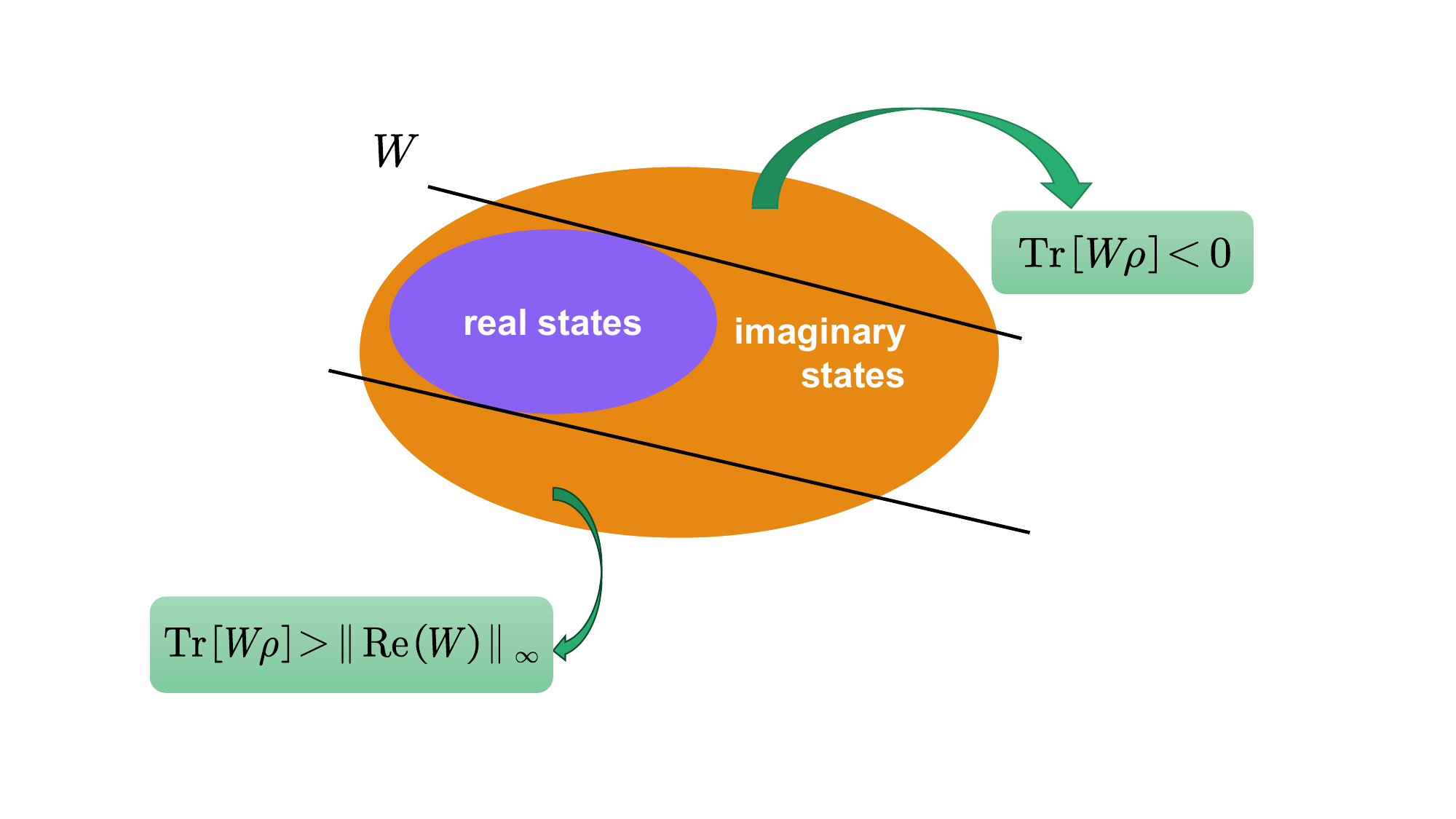}
    \caption{Classification of detectable imaginary quantum states via the imaginarity witness \(W\): those satisfying $\Tr[W\rho]<0$ and those satisfying $\Tr[W\rho]>\norm{\Re(W)}_{\infty}$.}
    \label{fig1}
\end{figure}

To characterize the collection of imaginarity witnesses with an identical spectral norm, we define
\begin{equation}\label{prior knowledge}
\mathbb{W}_S:=\left\{W \in \mathbb{H}^{\text{Re}}_\geqslant \, \bigg| \, \norm{\Re(W)}_\infty \leqslant S\right\}
\end{equation}
for a non-negative real number \(S\).
Considering the convex combination $\sum_k p_k W_k$ of finitely many Hermitian operators $\{W_k\}\subseteq\mathbb{W}_S$, the following inequality holds \cite{MatrixAnalysis},
\begin{equation*}
\norm{\sum_k p_k W_k}_\infty\leqslant\sum_k p_k \norm{W_k}_\infty\leqslant\sum_k p_k S =S,
\end{equation*}
which implies that $\sum_k p_k W_k\in\mathbb{W}_S$, so  $\mathbb{W}_S$ forms a convex set.

In \cite{zhangPLA130135} the authors proposed an alternative class of witnesses termed stringent imaginarity witnesses, satisfying \(\Tr[W\sigma] = 0\) for every \(\sigma\in\mathcal{R}\) and \(\Tr[W\rho] \neq 0\) for at least one imaginary state \(\rho\in\mathcal{D}\setminus\mathcal{R}\). For a stringent imaginarity witness \(W\), \(\Tr[W\sigma] = 0\) for all real states \(\sigma\in\mathcal{R}\) is equivalent to \(\text{Re}(W) = 0\).  It  follows that the stringent imaginarity witnesses must be chosen from the set $\mathbb{W}_0=\left\{W \in \mathbb{H}^{\text{Re}}_\geqslant \, \mid \, \|\Re(W)\|_\infty=0\right\}$.

From Theorem 1, a witness operator \(W\) equipped with prior knowledge satisfying \(\|\Re(W)\|_\infty=S>0\) can detect the imaginarity of a state \(\rho\) via either \(\Tr[W\rho]<0\) or \(\Tr[W\rho]>S\). This significantly enhances the conventional imaginarity‑detection capability of imaginarity witnesses; see FIG.~\ref{fig2}.

\begin{figure}[htbp]
    \centering
    \includegraphics[width=0.5\textwidth]{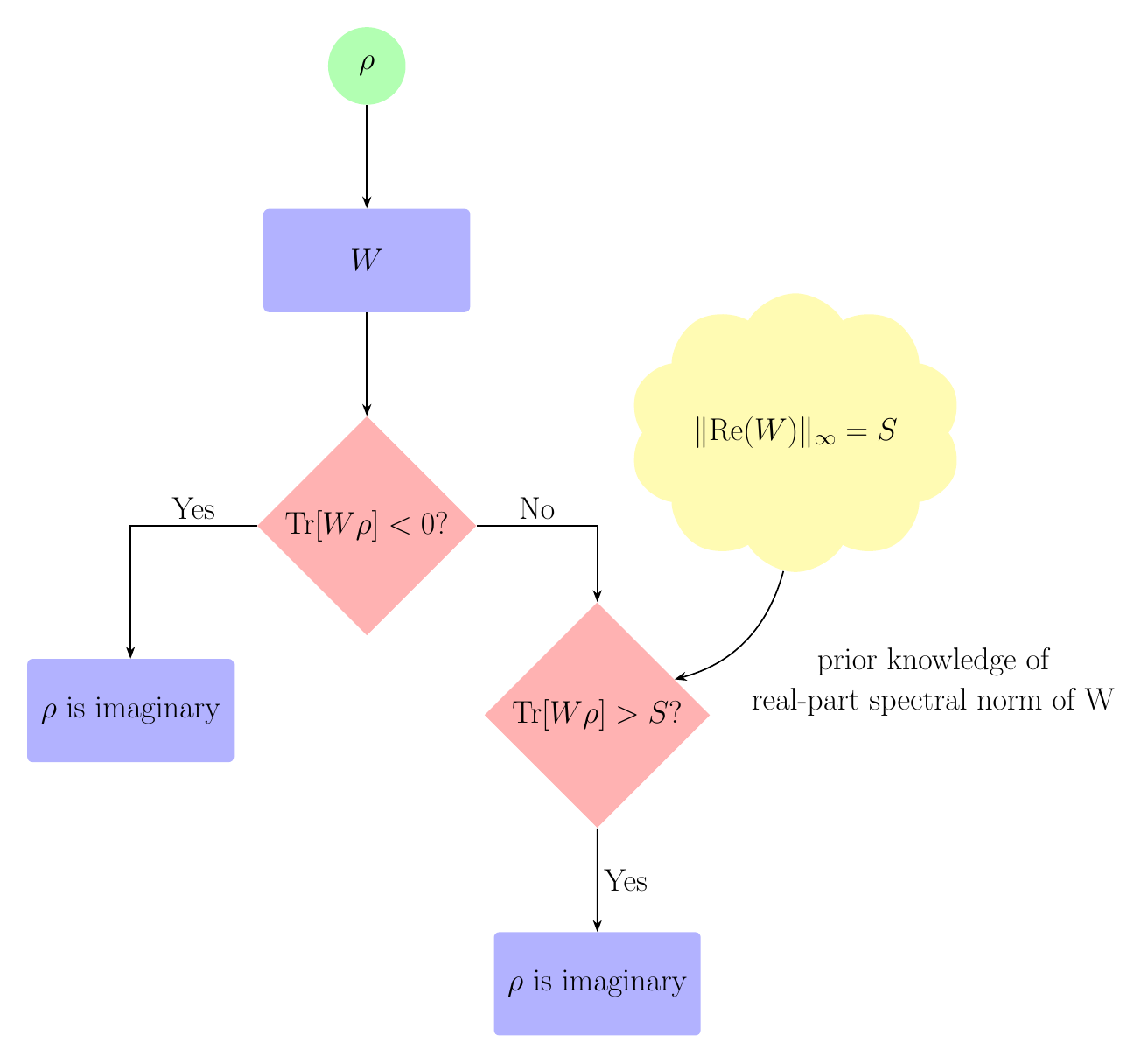}
    \caption{Witnessing imaginarity with prior knowledge of the real-part spectral norm of observable $W$.}
    \label{fig2}
\end{figure}

The existence of $W\in \mathbb{W}_S$ and $\rho$ such that $\Tr[W\rho] >S$ can be illustrated in the following way.
 Define
\begin{equation*}
W = S\ketbra{j}{j} - \frac{1}{2}\sqrt{s}\mathbf{i}\ketbra{j}{k} + \frac{1}{2}\sqrt{s}\mathbf{i}\ketbra{k}{j}
\end{equation*}
for any \(s>0\) and \(1\leqslant j\neq k\leqslant d\). \(W\) has distinct nonzero simple eigenvalues $\mu_1=\frac{S+\sqrt{S^2+s}}{2}>0$ and $\mu_2=\frac{S-\sqrt{S^2+s}}{2}<0$. Let \(|\phi_1\rangle\) and \(|\phi_2\rangle\) be the corresponding eigenvectors associated with $\mu_1$ and  $\mu_1$, respectively. It follows that \(W\) admits a spectral decomposition \(W = \mu_1\ketbra{\phi_1}{\phi_1} + \mu_2\ketbra{\phi_2}{\phi_2}\). For state $\rho=\ketbra{\phi_1}{\phi_1}$, we have \(\Tr[W\rho]=\mu_1=\frac{S+\sqrt{S^2+s}}{2}>S.\) Therefore, $W$ detects the imaginarity of the quantum state $\rho$.

Furthermore, note that an imaginarity witness $W$ is constructed from $\mathbb{H}^{\text{Re}}_{\geqslant}\bigcap\Delta_{-}$ \cite{zhangPLA130135}, where $W$ must have at least one negative eigenvalue to ensure the existence of some imaginary state $\rho$ satisfying $\Tr[W\rho]<0$. Namely, within the conventional imaginarity witness framework, no positive semi-definite Hermitian operator $W$ can detect the imaginarity of any quantum states. In what follows, we present an example to show that a positive semi-definite observable $W$ is still capable of detecting the imaginarity of certain class of quantum states, provided that the spectral norm of the real part \(\norm{\Re(W)}_{\infty}\) of $W$ is known in advance. This reveals that several Hermitian operators that are invalid as imaginarity witnesses in conventional settings can still act as valid witnesses in this framework.

\begin{eg}
Consider the following states,
\begin{equation*}
W_p=\begin{pmatrix}
2 & -\mathbf{i} & 0 \\
\mathbf{i} &\ \  2 & 0 \\
0 &\ \  0 & p
\end{pmatrix}, \quad
\rho=\begin{pmatrix}
\frac{1}{3} & -\frac{\mathbf{i}}{3} & 0 \\[3pt]
\frac{\mathbf{i}}{3} &\ \  \frac{1}{3} & 0 \\[3pt]
0 & \ \ 0 & \frac{1}{3}
\end{pmatrix},
\end{equation*} where $p\in [0,+\infty).$
It is verified that both \(\Re(W_p)\) and $W_p$ are positive semi-definite.
The spectral norm of $\Re(W_p)$ is given by
\begin{equation*}
\|\Re(W_p)\|_\infty =
\begin{cases}
2, & 0 \leqslant p \leqslant 2,\\
p, & 2< p <+\infty .
\end{cases}
\end{equation*}
Since $\Tr[W_p\rho]=\dfrac{p+6}{3}>0$, $W_p$ detects no imaginarity of $\rho$ in the conventional way. However,
since $\Tr[W_p\rho]=\dfrac{p+6}{3}>\|\Re(W_p)\|_\infty$ for $0< p<3$, $W_p$ detects the imaginarity of $\rho$ whenever $0< p<3$, 
see FIG.~\ref{figeg1}.

\begin{figure}[htbp]
    \centering
    \includegraphics[width=0.4\textwidth]{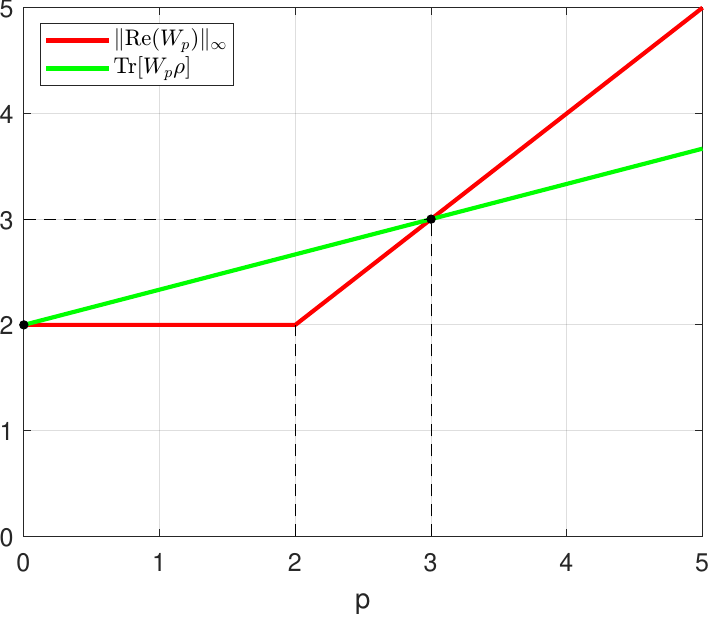}
    \caption{$\|\Re(W_p)\|_{\infty}$ and  $\Tr[W_p\rho]$ versus $p$. $\Tr[W_p\rho]>\|\Re(W_p)\|_{\infty}$ for $0< p<3$.}
    \label{figeg1}
\end{figure}
\end{eg}

\begin{eg}
Consider a family of quantum states $\ket{\psi(\theta,t)}$ and Hermitian operators $W(\alpha)$ given by:
\begin{equation*}
\ket{\psi(\theta,t)}=\frac{1}{\sqrt{1+t^2}}
\begin{pmatrix}
\cos\theta \\
\mathbf{i}\sin\theta \\
\mathbf{i}t
\end{pmatrix}
\end{equation*}
\begin{equation*}
W=
\begin{pmatrix}
1 & \mathbf{i}\alpha & \mathbf{i}\alpha \\
-\mathbf{i}\alpha & 1 & \mathbf{i}\alpha \\
-\mathbf{i}\alpha & -\mathbf{i}\alpha & 1
\end{pmatrix}\\
:=\mathbb{I}_3+\mathbf{i}\alpha B.
\end{equation*}
Here, $\Re(W(\alpha))$ is positive semi-definite with $\norm{\Re(W(\alpha))}_{\infty}=1$.
The eigenvalues of $W(\alpha)$ are $1,1-\sqrt{3}\alpha,1+\sqrt{3}\alpha$. 
$W(\alpha)$ has a negative eigenvalue if and only if either $\alpha>\frac{1}{\sqrt{3}}$ or $\alpha<-\frac{1}{\sqrt{3}}$.
Direct computation yields
\begin{equation*}
\begin{split}
\Tr[W(\alpha)\kb{\psi(\theta,t)}{\psi(\theta,t)}]&=\Tr[(\mathbb{I}_3+\mathbf{i}\alpha B)\kb{\psi(\theta,t)}{\psi(\theta,t)}] \\
&=1-\frac{2\alpha}{1+t^2}(\cos\theta\sin\theta+t\cos\theta).
\end{split}
\end{equation*}
When $t=1$, we have $$
\begin{array}{llll}
f(\alpha,\theta)&\equiv \Tr[W(\alpha)\kb{\psi(\theta,1)}{\psi(\theta,1)}]\\
&=1-\alpha(\cos\theta\sin\theta+\cos\theta)\\
&=1-\alpha g(\theta),\end{array}$$ where
$g(\theta)=\cos\theta\sin\theta+\cos\theta$. It is readily shown that $g(\theta)$ is monotonically increasing on $\left[-\frac{7\pi}{6}+2k\pi,\frac{\pi}{6}+2k\pi\right]$ and monotonically decreasing on $\left[\frac{\pi}{6}+2k\pi,\frac{5\pi}{6}+2k\pi\right]$,  with range $g(\theta)\in\left[-\frac{3\sqrt{3}}{4},\frac{3\sqrt{3}}{4}\right]$.

(1) When $\alpha>\frac{1}{\sqrt{3}}$, $f(\alpha,\theta)<0$ implies that $g(\theta)>\frac{1}{\alpha}$. In this case, $\Tr[W(\alpha)\kb{\psi(\theta,t)}{\psi(\theta,t)}]<0$ necessary requires that $\frac{1}{\alpha}<\frac{3\sqrt{3}}{4}$, i.e., $\alpha>\frac{4}{3\sqrt{3}}$, namely, for $\frac{1}{\sqrt{3}}<\alpha\leqslant\frac{4}{3\sqrt{3}}$, 
the conventional approach cannot detect the imaginarity of the state $\ket{\psi(\theta,1)}$. 
Moreover, from the properties of $g(\theta)$, there exist $\theta_1\in\left(-\frac{\pi}{2},\frac{\pi}{6}\right)$ and $\theta_2\in\left(\frac{\pi}{6},\frac{\pi}{2}\right)$ such that $g(\theta_1)=g(\theta_2)=\frac{1}{\alpha}$. Therefore, the parameter $\theta$ corresponding to detectable imaginary states lies in $\left(\theta_1+2k\pi,\theta_2+2k\pi\right)$. When $f(\alpha,\theta)>\norm{\Re(W(\alpha))}_{\infty}=1$, the condition $g(\theta)<0$ suffices. In this scenario, the parameter range of detectable imaginary states is $\theta\in\left(\frac{\pi}{2}+2k\pi,\frac{3\pi}{2}+2k\pi\right)$.

(2)When $\alpha<-\frac{1}{\sqrt{3}}$ and $f(\alpha,\theta)<0$, we obtain $g(\theta)<\frac{1}{\alpha}$. This immediately imposes the necessary condition $\frac{1}{\alpha}>-\frac{3\sqrt{3}}{4}$, or equivalently $\alpha<-\frac{4}{3\sqrt{3}}$ for $\Tr[W(\alpha)\kb{\psi(\theta,t)}{\psi(\theta,t)}]<0$. As there exist angles $\theta_1\in\left(\frac{\pi}{2},\frac{5\pi}{6}\right)$ and $\theta_2\in\left(\frac{5\pi}{6},\frac{3\pi}{2}\right)$ satisfying $g(\theta_1)=g(\theta_2)=\frac{1}{\alpha}$, the values of $\theta$ corresponding to detectable imaginary states belong to the intervals $\left(\theta_1+2k\pi,\theta_2+2k\pi\right)$. In the regime $f(\alpha,\theta)>\norm{\Re(W(\alpha))}_{\infty}=1$, we have $g(\theta)>0$. Hence, the parameter range yielding detectable imaginary states is $\theta\in\left(-\frac{\pi}{2}+2k\pi,\frac{\pi}{2}+2k\pi\right)$.

We visualize these results in FIG.~\ref{figeg2}. If the parameter $\theta$ lies in the green sector, the imaginary states are detected by $\Tr[W(\alpha)\kb{\psi(\theta,1)}{\psi(\theta,1)}]<0$ . If $\theta$ falls within the blue sector, the imaginary state is detectable by $\Tr[W(\alpha)\kb{\psi(\theta,1)}{\psi(\theta,1)}]>\norm{\Re(W)}_{\infty}$.

\begin{figure}[htbp]
    \centering
    \includegraphics[width=0.5\textwidth]{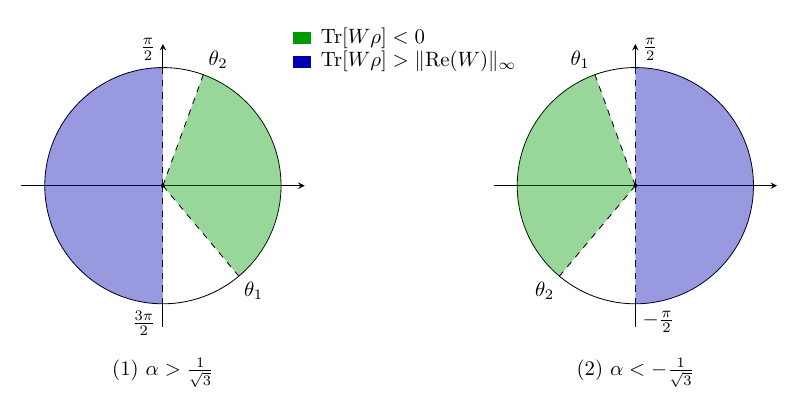}
    \caption{Imaginary states detected by $W(\alpha)$ for $\alpha>\frac{1}{\sqrt{3}}$ and $\alpha<-\frac{1}{\sqrt{3}}$.}
    \label{figeg2}
\end{figure}
\end{eg}


We now present a proposition characterizing when there exists an imaginary state $\rho$ satisfying $\Tr[W\rho]>\norm{\Re(W)}_{\infty}$ for a Hermitian operator $W$.

\begin{prop}
For a given $W\in\mathbb{W}_S$, there exists an imaginary state $\rho$ such that $\Tr[W\rho]>S$ if and only if $\|W\|_{\infty}>\|\Re(W)\|_{\infty}$.
\end{prop}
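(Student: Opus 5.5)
Throughout I read $S$ as $\|\Re(W)\|_\infty$, the tight value of the bound in Theorem~\ref{thmmain}. If $S$ were only some upper bound in the definition of $\mathbb{W}_S$, the ``if'' direction would fail for large $S$. The plan is to reduce both sides of the equivalence to a comparison of extreme eigenvalues. Since $W\in\mathbb{H}^{\text{Re}}_\geqslant$, $\Re(W)\geqslant 0$, so $S=\|\Re(W)\|_\infty=\lambda_{\max}(\Re(W))$. For a Hermitian $W$, $\max_{\rho\in\mathcal{D}}\Tr[W\rho]=\lambda_{\max}(W)$, attained at a top eigenvector. Moreover, any state with $\Tr[W\rho]>S$ is automatically imaginary, because it violates Theorem~\ref{thmmain}. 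The proposition is therefore equivalent to
\[
\lambda_{\max}(W)>\lambda_{\max}(\Re(W))\iff \|W\|_\infty>\|\Re(W)\|_\infty .
\]

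For the ``only if'' direction, take a state $\rho$ with $\Tr[W\rho]>S$. Then
\[
\|W\|_\infty\geqslant\lambda_{\max}(W)\geqslant\Tr[W\rho]>S=\|\Re(W)\|_\infty .
\]
For the converse, I would first record two facts:
\begin{itemize}
\item $\lambda_{\max}(W)\geqslant\lambda_{\max}(\Re(W))$ always holds. Indeed $W^\top$ has the same spectrum as $W$, and $\Re(W)=\tfrac12(W+W^\top)$, so for a unit vector $v$ we have $v^\dagger\Re(W)v\leqslant\tfrac12\bigl(\lambda_{\max}(W)+\lambda_{\max}(W^\top)\bigr)=\lambda_{\max}(W)$.
\item $\|W\|_\infty=\max\{\lambda_{\max}(W),-\lambda_{\min}(W)\}$.
\end{itemize}
It then suffices to exclude the case $\lambda_{\max}(W)=S$ together with $-\lambda_{\min}(W)>S$. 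In that case the top eigenvector of $W$ is the desired state $\rho$, which is imaginary by Theorem~\ref{thmmain}.

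The main obstacle is this exclusion, since a priori a large negative eigenvalue could make $\|W\|_\infty$ big without raising $\lambda_{\max}(W)$. I would argue by contradiction. Suppose $\lambda_{\max}(W)=S$. Then $W\leqslant S\,\mathbb{I}$, and transposing (which preserves the spectrum) gives $W^\top\leqslant S\,\mathbb{I}$. Positivity of the real part gives $W+W^\top\geqslant 0$, i.e.\ $W\geqslant -W^\top\geqslant -S\,\mathbb{I}$. Hence $\lambda_{\min}(W)\geqslant -S$ and $\|W\|_\infty\leqslant S$, contradicting $\|W\|_\infty>S$.

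Consequently $\|W\|_\infty>\|\Re(W)\|_\infty$ forces $\lambda_{\max}(W)>S$, and the eigenstate $\rho=\ketbra{\phi_{\max}}{\phi_{\max}}$ gives $\Tr[W\rho]=\lambda_{\max}(W)>S$. This completes the equivalence; the only delicate step is the use of $\Re(W)\geqslant0$ in the exclusion argument.
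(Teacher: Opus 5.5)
Your proof is correct. It follows the same skeleton as the paper's: compare $\lambda_{\max}(W)$ with $\lambda_{\max}(\Re(W))$, take a top eigenvector of $W$, and certify imaginarity via Theorem~\ref{thmmain}. Your reading $S=\|\Re(W)\|_\infty$ is also exactly what the paper's proof uses when it writes $\|\Re(W)\|_\infty=S$. You are right that the ``if'' direction fails otherwise.

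The substantive difference is in how $\|W\|_\infty$ is handled. The paper simply asserts ``recall that $\|W\|_\infty=\lambda_{\max}(W)$''. This is false for a general Hermitian operator whose most negative eigenvalue dominates, and since witnesses have negative eigenvalues it genuinely needs an argument. Your exclusion step supplies it. In fact the same reasoning, $W\geqslant -W^\top\geqslant -\lambda_{\max}(W)\,\mathbb{I}$, shows $\|W\|_\infty=\lambda_{\max}(W)$ for every $W\in\mathbb{H}^{\text{Re}}_\geqslant$. So your argument is precisely what makes the paper's identity legitimate.

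You also avoid the paper's mixture $\rho_\epsilon=(1-\epsilon)\ketbra{\psi}{\psi}+\epsilon\sigma$, which is unnecessary because the eigenstate itself gives $\Tr[W\rho]=\lambda_{\max}(W)>S$. With the paper's choice $\epsilon=1-S/\lambda_{\max}(W)$ one only gets $\Tr[W\rho_\epsilon]=S+\epsilon\Tr[W\sigma]$, which exceeds $S$ strictly only if $\Tr[W\sigma]>0$.

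Finally, your proof of $\lambda_{\max}(W)\geqslant\lambda_{\max}(\Re(W))$ averages $W$ and $W^\top$, whereas the paper restricts the Rayleigh quotient to real vectors; both are fine. Your exclusion step does not even need this inequality, since it only uses $\lambda_{\max}(W)\leqslant S$.
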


\begin{proof}
We first establish the inequality $\|W\|_{\infty}\geqslant\|\Re(W)\|_{\infty}$. Recall that the spectral norm satisfies $\|W\|_{\infty}=\lambda_{max}(W)$ and $\|\Re(W)\|_{\infty}=\lambda_{max}(\Re(W))$.
By definition,
\begin{equation*}
\begin{aligned}
\lambda_{max}(W)
&=\max_{\substack{x\in\mathbb{C}^n\\\|x\|=1}}x^\dagger W x
\geqslant \max_{\substack{x\in\mathbb{R}^n\\\|x\|=1}}x^\dagger W x \\
&=\max_{\substack{x\in\mathbb{R}^n\\\|x\|=1}}x^{\top} W x
=\max_{\substack{x\in\mathbb{R}^n\\\|x\|=1}}x^{\top}\Re(W)x \\
&=\lambda_{max}(\Re(W)),
\end{aligned}
\end{equation*}
where the equality holds if and only if there exists a real normalized vector $x\in\mathbb{R}^n$ such that $Wx=\lambda_{max}(W)x$.

We now prove the claimed equivalence. Suppose $\|W\|_{\infty}\leqslant\|\Re(W)\|_{\infty}$. From the above arguments, we obtain $\|W\|_{\infty}=\|\Re(W)\|_{\infty}$.
For any imaginary state $\rho$, one has $\Tr[W\rho]\leqslant\|W\|_{\infty}=\|\Re(W)\|_{\infty}$. Consequently, there exist no imaginary states for which \(\Tr[W\rho]>S\).

Assume that  $\|W\|_{\infty}>\|\Re(W)\|_{\infty}$.
Let $\ket{\psi}$ be a normalized eigenvector of $W$ corresponding to its maximum eigenvalue $\lambda_{max}(W)$.
By the condition $\|W\|_{\infty}>\|\Re(W)\|_{\infty}$, $\ket{\psi}$ cannot be a real vector.
Define
$\rho_\epsilon=(1-\epsilon)\ketbra{\psi}{\psi}+\epsilon\sigma,\quad \epsilon\in[0,1]$, where $\sigma$ is an arbitrary real state.
For sufficiently small $\epsilon>0$, $\rho_\epsilon$ is an imaginary state. Direct computation yields
\begin{equation*}
\begin{split}
\Tr[W\rho_{\epsilon}]
&=(1-\epsilon)\Tr\big[W\ketbra{\psi}{\psi}\big]+\epsilon\Tr[W\sigma] \\
&=(1-\epsilon)\lambda_{max}(W)+\epsilon\Tr[W\sigma].
\end{split}
\end{equation*}
Since $\lambda_{max}(W)>\|\Re(W)\|_{\infty}= S\geqslant \Tr[W\sigma]\geqslant 0$, we have
$\Tr[W\rho_{\epsilon}]>(1-\epsilon)\lambda_{max}(W)$.
Selecting a sufficiently small $\epsilon=1-\frac{S}{\lambda_{max}(W)}>0$ such that $\Tr[W\rho_\epsilon]>S$, we prove that such an imaginary state \(\rho=\rho_\epsilon\) exists.
\end{proof}

However, in contrast to the preceding scenario, where the spectral norm of the real part of the observable is bounded above by a positive constant, we may encounter cases in which the spectral norm of the real part of these Hermitian operators is merely positive, or equivalently, the operators are selected from the set \(\mathbb{W}_>:=\{W\in\mathbb{H}^{\text{Re}}_\geqslant\mid\|\operatorname{Re}(W)\|_{\infty}>0\}.\) With only this limited prior knowledge about the observable, no enhancement in imaginarity detection can be attained. To verify this, we only need to show that
\begin{equation*}
\left\{\Tr[W\sigma]\mid W\in\mathbb{W}_>,\sigma\in\mathcal{R}\right\}=\left\{s\in\mathbb{R}\mid s\geqslant 0\right\}.
\end{equation*}
That is, for any real number \(s \geqslant 0\), there exist \(W\in\mathbb{W}_>\) and \(\sigma\in\mathcal{R}\) such that \(\Tr[W\sigma]=s\). If $s>0$, choosing $W=s\mathbb{I}+\sqrt{d}s\mathbf{i}(-\ketbra{p}{q}+\ketbra{q}{p})$ with $1 \leqslant p < q \leqslant d$ and $\sigma=\frac{\mathbb{I}}{d}+b_{jk}(\ketbra{j}{k}+\ketbra{k}{j})$ with $1 \leqslant j < k \leqslant d$, $\abs{b_{jk}}\leqslant\frac{1}{d}$, so we have $\Tr[W\sigma]=s+\sqrt{d}rb_{jk}\mathbf{i}(-\delta_{qj}\delta_{pk}-\delta_{qk}\delta_{pj}+\delta_{pj}\delta_{qk}+\delta_{pk}\delta_{qj})=s$. If $s=0$, setting $W=\ketbra{j}{j}-\sqrt{d}\mathbf{i}\ketbra{j}{k}+\sqrt{d}\mathbf{i}\ketbra{k}{j}$ and $\sigma=\ketbra{k}{k}$, we have $\Tr[W\sigma]=0=s$. Consequently, to ensure that the witness operator $W$ detects the imaginarity of $\rho$, we still need to observe the violation of the inequality $\Tr[W\rho]<0$ as usual.

Similarly, we collect all Hermitian operators in $\mathbb{H}^{\text{Re}}_\geqslant$ such that that $\norm{\Re(W)}_\infty \geqslant 0$, i.e., $\mathbb{W}_\geqslant:=\{W\in\mathbb{H}^{\text{Re}}_\geqslant\mid\norm{\Re(W)}_\infty\geqslant0\}$. From the above discussion, it follows that the relation $\{\Tr[W\sigma]\mid W\in\mathbb{W}_\geqslant,\sigma\in\mathcal{R}\}=\{s\in\mathbb{R}\mid s\geqslant 0\}$ also holds.


Let $S$ be a positive real number, we introduce the notations $\mathbb{E}_S^\ell[W]:=\{\rho \in \mathcal{D} \mid \Tr[W\rho] <0\}$ and $\mathbb{E}_S^r[W]:=\{\rho \in \mathcal{D} \mid \Tr[W\rho] >S\}$. Accordingly, we have $\mathbb{E}_S[W]=\mathbb{E}_S^\ell[W]\bigcup\mathbb{E}_S^r[W]$. With these notations, we present the relationship between the traditional witness method and our proposed method.

\begin{prop}\label{dual witnesses}
Let $W\in\mathbb{W}_S$ with $S$ being a positive real number, we have the following conclusion.
\begin{enumerate}
\item[1)] If $\rho$ is an imaginary state satisfying $\Tr[W\rho]>S\geqslant\norm{\Re(W)}_{\infty}$, i.e. $\rho\in\mathbb{E}_S^r[W]$, then there exists a witness operator $\widetilde{W}$ such that $\Tr[\widetilde{W}\rho]<0$, namely, $\rho\in\mathbb{E}_\geqslant\left[\widetilde{W}\right]$.
\item[2)] If $\rho$ is an imaginary state satisfying $\Tr[W\rho]<0$, i.e. $\rho\in\mathbb{E}_{\geqslant}[W]$, then there exists a positive number $S'$ and a corresponding witness operator $\widehat{W}\in\mathbb{W}_{S'}$ such that $\Tr[\widehat{W}\rho]>S'$, namely, $\rho\in\mathbb{E}_{S'}^r\left[\widehat{W}\right]$.
\end{enumerate}
\end{prop}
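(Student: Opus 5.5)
The plan is to use one affine construction in both directions: reflect the witness about a multiple of the identity, $W\mapsto c\,\mathbb{I}-W$ for a suitable constant $c>0$. This reflection swaps the two tails $\Tr[\,\cdot\,\rho]<0$ and $\Tr[\,\cdot\,\rho]>c$. It also keeps the real part positive semi-definite whenever $c\geqslant\norm{\Re(W)}_\infty$. The key facts are two. First, $\Re(\mathbb{I})=\mathbb{I}$, so $\Re(c\mathbb{I}-W)=c\mathbb{I}-\Re(W)$. Second, since $\Re(W)\geqslant 0$, its spectrum lies in $[0,\norm{\Re(W)}_\infty]$.

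For 1), I will set $\widetilde{W}:=S\,\mathbb{I}-W$.
\begin{itemize}
\item Because $0\leqslant\Re(W)\leqslant\norm{\Re(W)}_\infty\mathbb{I}\leqslant S\,\mathbb{I}$, the real part $\Re(\widetilde{W})=S\mathbb{I}-\Re(W)$ has spectrum in $[S-\norm{\Re(W)}_\infty,S]\subseteq[0,S]$. Hence $\widetilde{W}\in\mathbb{H}^{\text{Re}}_\geqslant$, and indeed $\widetilde{W}\in\mathbb{W}_S$.
\item Since $\Tr[\rho]=1$, we get $\Tr[\widetilde{W}\rho]=S-\Tr[W\rho]<0$.
\item The last inequality forces $\widetilde{W}$ to have a negative eigenvalue. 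So $\widetilde{W}\in\mathbb{H}^{\text{Re}}_\geqslant\cap\Delta_-$ is a genuine imaginarity witness in the conventional sense, and $\rho\in\mathbb{E}^\ell[\widetilde{W}]$.
\end{itemize}

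For 2), I will pick $S':=\norm{\Re(W)}_\infty$ if this is positive, and $S':=1$ (any positive number) otherwise, then set $\widehat{W}:=S'\mathbb{I}-W$.
\begin{itemize}
\item The same spectral argument gives $\Re(\widehat{W})=S'\mathbb{I}-\Re(W)$ with spectrum in $[S'-\norm{\Re(W)}_\infty,S']\subseteq[0,S']$. Hence $\widehat{W}\in\mathbb{H}^{\text{Re}}_\geqslant$ and $\norm{\Re(\widehat{W})}_\infty\leqslant S'$, that is, $\widehat{W}\in\mathbb{W}_{S'}$.
\item Then $\Tr[\widehat{W}\rho]=S'-\Tr[W\rho]>S'$, so $\rho\in\mathbb{E}^r_{S'}[\widehat{W}]$.
\item By Theorem~\ref{thmmain}, every real state $\sigma$ satisfies $\Tr[\widehat{W}\sigma]\leqslant S'$. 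So this detection is consistent with the framework, and $\rho$ is certified imaginary by the upper criterion.
\end{itemize}

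There is no deep obstacle here. The only delicate point is the degenerate stringent case $\Re(W)=0$, i.e.\ $W\in\mathbb{W}_0$. There the natural choice $S'=\norm{\Re(W)}_\infty$ would be $0$, which the statement excludes, so we must take an arbitrary positive $S'$ instead. The other thing to state explicitly is that the positivity of $\Re(\widetilde W)$ and $\Re(\widehat W)$ relies on $\Re(W)\geqslant 0$ together with $c\geqslant\norm{\Re(W)}_\infty$; the norm bound alone would not suffice. Finally, I read the notation $\mathbb{E}_\geqslant[\cdot]$ in the statement as the conventional detection set $\{\rho\mid\Tr[\,\cdot\,\rho]<0\}=\mathbb{E}^\ell[\cdot]$.
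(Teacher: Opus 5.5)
Your proposal is correct and is essentially the paper's proof. The paper also takes $\widetilde{W}=S\mathbb{I}_d-W$ in 1) and $\widehat{W}=S'\mathbb{I}_d-W$ with any $S'\geqslant\|\Re(W)\|_\infty$ in 2), and checks only that the real part stays positive semi-definite with norm at most $S'$ and that the expectation value shifts by the constant. Your spectral argument, the remark that $\widetilde{W}$ must have a negative eigenvalue, and the strictly positive choice of $S'$ when $\Re(W)=0$ just make explicit what the paper leaves implicit; note that $S'=S$ also works directly, since $W\in\mathbb{W}_S$ with $S>0$.
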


\begin{proof}
First, suppose $W\in\mathbb{W}_S$ and $\Tr[W\rho]>S$. Define $\widetilde{W}:=S\mathbb{I}_d-W$. It is verifies that $\Re\left(\widetilde{W}\right)\geqslant 0$ and $\Tr[\widetilde{W}\rho]=S-\Tr[W\rho]<0$, which implies $\rho\in\mathbb{E}_\geqslant\left[\widetilde{W}\right]$.

Second, suppose $W\in\mathbb{W}_\geqslant$ and $\Tr[W\rho]<0$. Define $\widehat{W}=S'\mathbb{I}_d-W$, where $S'$ is any real number no less than $\norm{\Re(W)}_{\infty}$. Then $\Re\left(\widehat{W}\right)$ is positive semi-definite and $\norm{\Re\left(\widehat{W}\right)}_{\infty}\leqslant S'$, which yields $\widehat{W}\in\mathbb{W}_{S'}$. Since $\Tr[\widehat{W}\rho]=S'-\Tr[W\rho]>S'$, we have $\rho\in\mathbb{E}_{S'}^r\left[\widehat{W}\right]$.
\end{proof}

Proposition~\ref{dual witnesses} indicates that, given prior knowledge of the real‑part spectral norm $\norm{\Re(W)}_{\infty}$ of an observable $W$, all additional imaginary states detected by $W$ within our framework can be certified via a corresponding witness $\widetilde{W}$ in the conventional imaginarity witness paradigm. Conversely, imaginary states detected by the conventional method via negative expectation values can also be detected within our framework by means of a witness operator $\widehat{W}$ when its expectation values exceeding $\|\widehat{W}\|_{\infty}$. This is visualized in FIG.~\ref{figmirror}.

\begin{figure}[htbp]
    \centering
    \includegraphics[width=0.4\textwidth,trim=25 40 25 25, clip]{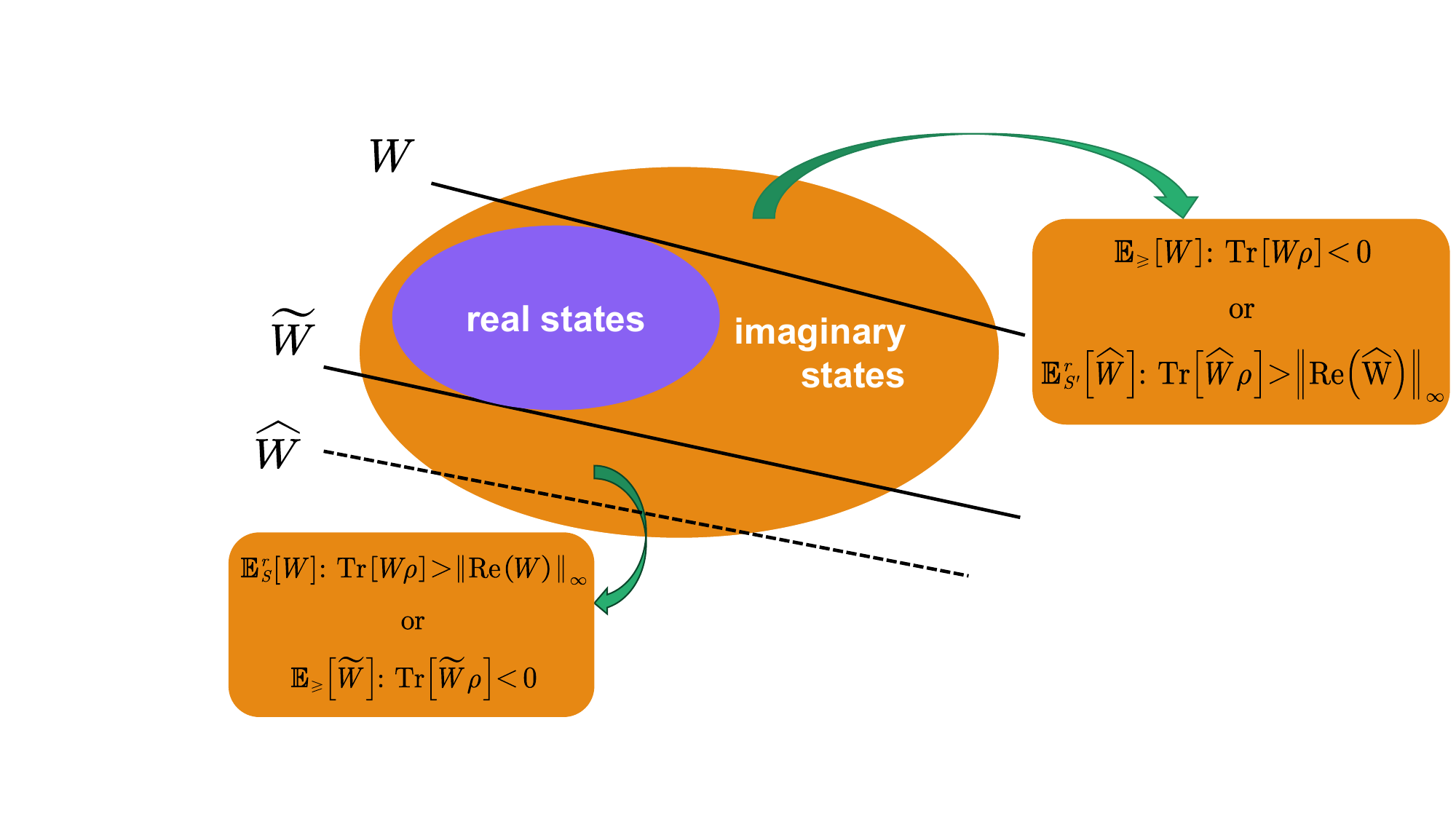}
    \caption{Relationship between the two criteria for detecting imaginary states: negative expectation values and expectation values exceeding the real‑part spectral norm of the observable.
}
    \label{figmirror}
\end{figure}

It follows from Proposition~\ref{dual witnesses} that for any $W \in \mathbb{H}^{\Re}_\geqslant$, if we set $\norm{\Re(W)}_\infty \leqslant S$ with $S$ being a nonnegative number, we obtain
\begin{equation}\label{leftright}
\mathbb{E}_S[W] = \mathbb{E}_\geqslant[W] \bigcup \mathbb{E}_\geqslant[S\mathbb{I}_d - W].
\end{equation}
It is easy to check that $\mathbb{E}_{\geqslant}[W]$ is a  convex set, i.e., if $\Tr[W\rho_1]<0$ and $\Tr[W\rho_2]<0$, then $\Tr[W((1-t)\rho_1+t\rho_2)]<0$ for all $t\in[0,1]$. However, if $\mathbb{E}_{\geqslant}[W]\neq\emptyset$ and $\mathbb{E}_\geqslant[S\mathbb{I}_d-W]\neq\emptyset$, the set $\mathbb{E}_S[W]$ is not convex. In this case, $\mathbb{E}_S[W]$ is a disjoint union of two convex sets. In fact, for any $\rho_1\in\mathbb{E}_\geqslant[W]$ and $\rho_2\in\mathbb{E}_\geqslant[S\mathbb{I}_d-W]$, we have both $\Tr[W\rho_1]<0$ and $\Tr[W\rho_2]>S$. Therefore, there are some $t^*\in(0, 1)$ such that $0<\Tr[W((1-t^*)\rho_1+t^*\rho_2)]<S$, which indicates that $(1-t^*)\rho_1+t^*\rho_2\notin\mathbb{E}_S[W]$.

\section{CLASSIFICATION OF QUANTUM IMAGINARITY WITNESSES}\label{sect3}

Set $F_S=\{\Tr[W\sigma] \mid W \in \mathbb{W}_S, \sigma \in \mathcal{R}\}$ and $D_S=\{\Tr[W\rho] \mid W \in \mathbb{W}_S, \rho \in \mathcal{D}\}$. From the above discussion, the set $F_S$ is classified into following three classes: (1) $F_S=\{s \in \mathbb{R} \mid 0 \leqslant s \leqslant S\}$ when $S$ is a positive real number; (2) $F_S=\{s \in \mathbb{R} \mid s \geqslant 0\}$ when $S$ denotes either $>$ or $\geqslant$; and (3) $F_S=\{0\}$ when $S=0$. Consequently, for every fixed $S$ we have a corresponding imaginarity-witness criterion: for each $\rho \in \mathcal{D}$, if there exists some $W\in\mathbb{W}_S$ such that $\Tr[W\rho]\in D_S\setminus F_S$, then $\rho$ is an imagianry state. To characterize the imaginarity witness $W$, we introduce $\mathbb{E}_S[W]$ as the set of all imagianry states that can be witnessed by $W$ in the setting $(\mathbb{W}_S, F_S, D_S)$, that is, $\mathbb{E}_S[W]=\{\rho \in \mathcal{D} \mid \Tr[W\rho] \in D_S \setminus F_S\}$. Hereafter, for the convenience of subsequent discussion, we denote the two cases in (2) as \(S\) = \(>\) and \(S\) = \(\geq\), respectively.

\begin{thm}\label{completeness}
The imaginarity-witness criterion given by the set $(\mathbb{W}_S, F_S, D_S)$ is complete for any $S$, namely,
\begin{equation}
\mathcal{D}\setminus\mathcal{R}=\bigcup_{W\in\mathbb{W}_S}\mathbb{E}_S[W].
\end{equation}
\end{thm}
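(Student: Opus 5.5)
The plan is to prove the two inclusions separately. The inclusion $\bigcup_{W\in\mathbb{W}_S}\mathbb{E}_S[W]\subseteq\mathcal{D}\setminus\mathcal{R}$ is soundness and is almost immediate. Take $W\in\mathbb{W}_S$ and $\sigma\in\mathcal{R}$. By definition of $F_S$, $\Tr[W\sigma]\in F_S$. Hence any $\rho$ with $\Tr[W\rho]\in D_S\setminus F_S$ cannot be real.

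For the finite case $S>0$ I will also record that $F_S=[0,S]$ really is the set of attainable values. The containment $F_S\subseteq[0,S]$ follows from $\Re(W)\geqslant 0$ together with Theorem~\ref{thmmain}. Equality is not actually needed for completeness; only the containment matters.

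The substantive inclusion is $\mathcal{D}\setminus\mathcal{R}\subseteq\bigcup_W\mathbb{E}_S[W]$. I will use a single universal construction attached to $\rho$. For an imaginary $\rho$, the matrix $\Im(\rho)$ is real, antisymmetric and nonzero. Put $K:=\mathbf{i}\,\Im(\rho)$. Then $K$ is Hermitian, since $K^\dagger=-\mathbf{i}\,\Im(\rho)^\top=K$, and $\Re(K)=0$. The key computation uses the splitting $\rho=\Re(\rho)+\mathbf{i}\Im(\rho)$ together with $\Tr[AB]=0$ whenever $A$ is antisymmetric and $B$ is symmetric. It gives
\[
\Tr[K\rho]=\mathbf{i}\Tr[\Im(\rho)\Re(\rho)]-\Tr[\Im(\rho)^2]=\Tr[\Im(\rho)^\top\Im(\rho)]=\|\Im(\rho)\|_2^2=:c>0 .
\]
This positivity is the main point, and it is the step to double-check. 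All witnesses below are built as $aI+tK$ with real $a\geqslant 0$ and $t\in\mathbb{R}$. Such an operator has $\Re=aI\geqslant 0$, $\|\Re\|_\infty=a$, and $\Tr[(aI+tK)\rho]=a+tc$.

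The argument then splits into the four cases for $S$.
\begin{itemize}
\item[(i)] $S=0$: take $W=-K\in\mathbb{W}_0$. Then $\Tr[W\rho]=-c\neq 0$, so $\Tr[W\rho]\notin F_0=\{0\}$.
\item[(ii)] $S>0$: take $W=SI+tK$ with $t>0$. Then $W\in\mathbb{W}_S$ and $\Tr[W\rho]=S+tc>S$, so $\rho\in\mathbb{E}_S^r[W]$. Alternatively, $W=-K$ gives a negative value, in line with Proposition~\ref{dual witnesses}.
\item[(iii)] $S$ equal to $>$: take $W=\varepsilon I-tK$ with $\varepsilon>0$ and $tc>\varepsilon$. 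Then $\|\Re(W)\|_\infty=\varepsilon>0$, so $W\in\mathbb{W}_>$, and $\Tr[W\rho]=\varepsilon-tc<0$, which lies outside $F_>=[0,\infty)$.
\item[(iv)] $S$ equal to $\geqslant$: the witness $-K$ works.
\end{itemize}
In every case the value $\Tr[W\rho]$ lies in $D_S$ automatically, because $W\in\mathbb{W}_S$ and $\rho\in\mathcal{D}$. So $\rho\in\mathbb{E}_S[W]$.

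The only genuine obstacle is producing, for an arbitrary imaginary $\rho$, a witness with controlled real part whose expectation on $\rho$ is nonzero. The trace identity above handles this, so I expect the remaining work to be bookkeeping of the cases.
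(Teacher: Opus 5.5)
Your proof is correct, but you use a different construction than the paper.

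\textbf{The paper's construction.} It picks one entry with $\Im(\rho_{mn})\neq 0$ and uses the fixed elementary operator $W_{m,n}=\frac{\mathbf{i}}{2}\bigl(\ketbra{m}{n}-\ketbra{n}{m}\bigr)$, whose expectation $\Im(\rho_{mn})$ has unknown sign. It then handles the sign case by case:
\begin{itemize}
\item for $S={\geqslant}$, it tries both $\pm W_{m,n}$;
\item for $S={>}$ and $S>0$, it rescales by $1/\Im(\rho_{mn})$ and adds an auxiliary $\widetilde{W}$ with admissible real part.
\end{itemize}

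\textbf{Your construction.} You take the state-adapted operator $K=\mathbf{i}\Im(\rho)=2\sum_{j<k}\Im(\rho_{jk})W_{j,k}$. This is the paper's family of elementary witnesses, weighted by the imaginary parts of $\rho$ themselves. As a result $\Tr[K\rho]=\|\Im(\rho)\|_2^2=2\sum_{j<k}\Im(\rho_{jk})^2>0$ has a fixed sign. Every case then reduces to $aI+tK$ with $\Re(aI+tK)=aI$, so membership in $\mathbb{W}_S$ and the value of $\|\Re(W)\|_\infty$ are immediate. You also state two things the paper leaves implicit: the soundness inclusion, and the fact that only the containment $F_S\subseteq[0,S]$ (resp.\ $[0,\infty)$) is needed.

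\textbf{What the paper's route buys.} For $S=0$ and $S={\geqslant}$, all of its witnesses come from the fixed finite family $\{\pm W_{j,k}\}$. That is exactly what the sufficiency half of the finite-completability theorem relies on. Your $K$ varies with $\rho$ over a continuum, so recovering that corollary needs one more remark. Positivity of $\Tr[K\rho]$ forces some term of the expansion to have $\Tr[W_{j,k}\rho]=\Im(\rho_{jk})\neq 0$, and this is the paper's argument.
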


\begin{proof}
Suppose $\rho$ is a imaginarity. Without loss of generality, we assume $\Im(\rho_{mn}) \neq 0$. We need to show that there are some $W\in\mathbb{W}_S$ such that $\Tr[W\rho] \in D_S \setminus F_S$.

(1) Set \(S=0\). For all \(1 \leqslant j < k \leqslant d\), define
\(W_{j,k}=\frac{\mathbf{i}}{2}\big(\ketbra{j}{k}-\ketbra{k}{j}\big).\)
By assumption, \(\Im(\rho_{mn}) \neq 0\), which yields
\(\Tr[W_{m,n}\rho]=\Im(\rho_{mn}) \neq 0.\)
Since \(W_{m,n}\in\mathbb{W}_0\), we therefore obtain \(\rho\in\mathbb{E}_0(W_{m,n})\). We have thus identified a Hermitian operator $W_{m,n}$ in $\mathbb{W}_0$, that detects the imaginarity of the quantum state $\rho$. So
\begin{equation*}
\mathcal{D}\setminus\mathcal{R}=\bigcup_{W\in\mathbb{W}_0}\mathbb{E}_0[W].
\end{equation*}

(2) Set $S$ = $\geqslant$. It follows from (1) that $\Tr[W_{m,n}] \neq 0$, so at least one of $\Tr[W_{m,n}\rho]$ and $\Tr[-W_{m,n}\rho]$ must be negative. Consequently, there exists some $W \in \mathcal{W}_\geqslant:=\{W_{j,k}, -W_{j,k} \mid 1 \leqslant j < k \leqslant d\}$ satisfying $\Tr[W\rho]<0$, and from $\mathcal{W}_\geqslant \subseteq \mathbb{W}_\geqslant$, we obtain $\mathcal{D}\setminus\mathcal{R}=\bigcup_{W\in\mathbb{W}_\geqslant}\mathbb{E}_\geqslant(W)$.

(3) Set $S$ = $>$. Let $W=\dfrac{s - \operatorname{Tr}\left[\widetilde{W}\rho\right]}{\Im(\rho_{mn})}W_{m,n}+\widetilde{W}$, where $s<0$ and $\widetilde{W}$ is a Hermitian operator such that $\Re\left(\widetilde{W}\right)$ is positive semi-definite and $\norm{\Re\left(\widetilde{W}\right)}_\infty>0$. Therefore, $W\in\mathbb{W}_>$ and $\Tr[W\rho]=s<0$. So, there exists a Hermitian operator in $\mathbb{W}_>$ that can detect the imaginarity of $\rho$.

(4) Let $S$ be a positive number. Similar to case (3), let $W=\dfrac{S+1 - \operatorname{Tr}\left[\widetilde{W}\rho\right]}{\Im(\rho_{mn})}W_{m,n}+\widetilde{W}$, where $\widetilde{W}$ satisfying that $\Re\left(\widetilde{W}\right) \geqslant 0$ and $\norm{\Re\left(\widetilde{W}\right)}_\infty\leqslant S$. So we have $W\in\W_S$ and $\Tr[W\rho] = S+1 > S$, i.e., $W$ detects the imaginarity of the state $\rho$.
\end{proof}

Having established completeness, we proceed the imaginarity-witness criteria for the four corresponding cases as follows.

(i) $(\mathbb{W}_S, F_S, D_S)$ with $S$ being a positive real number: a state $\rho\in\mathcal{D}$ is imaginary if and only if a $W\in\mathbb{W}_S$ such that either $\Tr[W\rho]<0$ or $\Tr[W\rho]>S$ exists.

(ii) $(\mathbb{W}_>, F_>, D_>)$: a state $\rho\in\mathcal{D}$ is imaginary if and only if there exists $W\in\mathbb{W}_>$ such that $\Tr[W\rho]<0$.

(iii) $(\mathbb{W}_\geqslant, F_\geqslant, D_\geqslant)$: a state $\rho\in\mathcal{D}$ is imaginary if and only if a $W\in\mathbb{W}_\geqslant$ such that $\Tr[W\rho]<0$ exists.

(iv) $(\mathbb{W}_0, F_0, D_0)$: a state $\rho\in\mathcal{D}$ is imaginary if and only if a $W\in\mathbb{W}_0$ such that $\Tr[W\rho]\neq0$ exists.

We call a imaginarity-witness criterion $(\mathbb{W}_S, F_S, D_S)$ finitely completable if all the imaginary states can be detected by a finite set of imaginarity witnesses in $\mathbb{W}_S$. That is, a finite set $\{W_k\}_{k=1}^n\subseteq\mathbb{W}_S$ such that
\begin{equation}
\mathcal{D}\setminus\mathcal{R}=\bigcup_{k=1}^n\mathbb{E}_S[W]
\end{equation}
exists. Otherwise, we call it finitely incompletable.

\begin{thm}
The imaginarity-witness set $(\mathbb{W}_S, F_S, D_S)$ is finitely completable if and only if $S=0$ or $S$ = $\geqslant$.
\end{thm}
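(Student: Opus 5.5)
The theorem has two directions: sufficiency (finite families exist for $S=0$ and $S=\geqslant$) and necessity (no finite family works when $S>0$ is a number or $S={>}$).

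\emph{Sufficiency.} We reuse the constructions in the proof of Theorem~\ref{completeness}. For $S=0$, the finite family $\{W_{j,k}\}_{1\leqslant j<k\leqslant d}\subseteq\mathbb{W}_0$ suffices. If $\rho$ is imaginary, some off-diagonal entry has $\Im(\rho_{mn})\neq 0$, so $\Tr[W_{m,n}\rho]\neq0$, i.e. $\rho\in\mathbb{E}_0[W_{m,n}]$. For $S={\geqslant}$, the finite family $\mathcal{W}_\geqslant=\{\pm W_{j,k}\}$, with $d(d-1)$ elements, suffices: one of $\pm\Im(\rho_{mn})$ is negative. No further work is needed.

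\emph{Necessity, $S={>}$.} Suppose finitely many $W_1,\dots,W_n\in\mathbb{W}_>$ cover $\mathcal{D}\setminus\mathcal{R}$. The criterion here is $\Tr[W_k\rho]<0$. Each $\mathbb{E}_>[W_k]$ is an open subset of $\mathcal{D}$ whose closure misses some neighbourhood of the real states. Indeed, for $\sigma\in\mathcal{R}$ we have $\Tr[W_k\sigma]=\Tr[\Re(W_k)\sigma]\geqslant0$, and $W_k\mapsto\Tr[W_k\rho]$ is Lipschitz in $\rho$ with constant $\|W_k\|_\infty$.

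The plan is to find imaginary states very close to a real state that escape every $W_k$. Take a real state $\sigma_0$ in the interior of the real states, say $\mathbb{I}/d$. Then $\Tr[W_k\sigma_0]=\Tr[\Re(W_k)]/d$, which is strictly positive because $\Re(W_k)\geqslant0$ and $\Re(W_k)\neq0$ (this is where $\|\Re(W_k)\|_\infty>0$ is used). Let $c=\min_k \Tr[\Re(W_k)]/d>0$ and $M=\max_k\|W_k\|_\infty$. Then any state $\rho$ with $\|\rho-\sigma_0\|_1<c/M$ satisfies $\Tr[W_k\rho]>0$ for all $k$. But $\rho=\mathbb{I}/d+\epsilon\,\mathbf{i}(\kb{1}{2}-\kb{2}{1})$, for small $\epsilon>0$, is an imaginary state in that neighbourhood, a contradiction.

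\emph{Necessity, $S>0$ a positive number.} The same idea applies, now with both sides of the window. For $\sigma_0=\mathbb{I}/d$ we have $\Tr[W_k\sigma_0]=\Tr[\Re(W_k)]/d$. This value lies in $[0,S]$, but it may sit at an endpoint: it equals $0$ if $\Re(W_k)=0$, and it equals $S$ only if $\Re(W_k)=S\mathbb{I}$. So $\mathbb{I}/d$ need not be strictly inside $(0,S)$ for every $k$, and this is the main obstacle. I would handle it by choosing a better base point. The option I would try first is a generic real state $\sigma_0$ of full rank and non-degenerate spectrum, chosen to avoid the finitely many bad hyperplanes $\{\sigma:\Tr[\Re(W_k)\sigma]=0\}$ and $\{\sigma:\Tr[\Re(W_k)\sigma]=S\}$. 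These hyperplanes are proper within the affine span of $\mathcal{R}$ whenever $\Re(W_k)\neq0$ and $\Re(W_k)\neq S\mathbb{I}$.

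The two degenerate cases need separate handling, by perturbing off the real subspace. If $\Re(W_k)=0$, then $\Tr[W_k\rho]$ is linear in $\Im(\rho)$. If $\Re(W_k)=S\mathbb{I}$, then $\Tr[W_k\rho]-S$ is likewise linear in $\Im(\rho)$. In both cases, imaginary directions annihilated by all these finitely many linear functionals $\Im(\rho)\mapsto\Tr[\Im(W_k)\Im(\rho)]$ are needed. Such nonzero antisymmetric $\Im(\rho)$ exist if the number of constraints is less than $d(d-1)/2$, but not for arbitrary $n$.

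So in general I would instead exploit the sign: in a degenerate case, detection requires $\Tr[\Im(W_k)\,\Im(\rho)]<0$ (when $\Re(W_k)=0$) or $>0$ (when $\Re(W_k)=S\mathbb{I}$). I would pick the perturbation direction $A$ and compare it with $-A$; this is an argument in $d=2$ and the reduction to a $2\times2$ block is the step I am least sure about. For the nondegenerate $W_k$, the generic interior point $\sigma_0$ together with the continuity bound from the previous case then shows that a small perturbation $\sigma_0+\epsilon\,\mathbf{i}A$ is imaginary yet lies in $\mathcal{D}\setminus\bigcup_k\mathbb{E}_S[W_k]$. This contradicts finite completability.
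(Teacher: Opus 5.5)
Your sufficiency argument and your necessity argument for $S={>}$ are correct and essentially the paper's: the paper perturbs $\pi_d=\mathbb{I}_d/d$ toward an imaginary $\rho'$ and uses $\zeta=\min_k\Tr[W_k\pi_d]>0$, which is your trace-norm ball around $\mathbb{I}/d$.

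\textbf{The gap is the case of a positive real $S$, and it cannot be closed.} The degenerate witnesses you flagged are genuine counterexamples to the ``only if'' direction, not a technicality to patch. Because $\mathbb{W}_S$ is defined by $\norm{\Re(W)}_\infty\leqslant S$, it contains $\mathbb{W}_0$. In particular it contains the finite family $\{\pm W_{j,k}\}_{1\leqslant j<k\leqslant d}$ with $W_{j,k}=\frac{\mathbf{i}}{2}(\kb{j}{k}-\kb{k}{j})$. If $\rho$ is imaginary, pick $m<n$ with $\Im(\rho_{mn})\neq0$. Then $\Tr[\pm W_{m,n}\rho]=\pm\Im(\rho_{mn})$, one of which is negative, so $\rho\in\mathbb{E}_S[W_{m,n}]\cup\mathbb{E}_S[-W_{m,n}]$. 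Hence $(\mathbb{W}_S,F_S,D_S)$ is finitely completable for every $S>0$, using the very family that works for $S={\geqslant}$. Insisting on $\norm{\Re(W)}_\infty=S$ does not help either: $\{S\mathbb{I}_d\pm W_{j,k}\}$ completes via $\Tr[W\rho]=S\pm\Im(\rho_{mn})>S$. Your own observation, that the common kernel of the functionals $\Im(\rho)\mapsto\Tr[\Im(W_k)\Im(\rho)]$ can be trivial once $n\geqslant d(d-1)/2$, is exactly this obstruction. The ``compare $A$ with $-A$'' idea cannot rescue it, since a finite family is free to contain both $W$ and $-W$.

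The paper's proof of this case has the same hole. Its choice $0<\epsilon<\min\{(\Tr[W_k\pi_d]-\eta_k)/(M+\xi+1),(S-\Tr[W_k\pi_d])/(M+\xi+1)\}$ tacitly requires $\eta_k<\Tr[W_k\pi_d]<S$ for every $k$. That fails whenever $\Re(W_k)$ is a multiple of $\mathbb{I}_d$, and cannot be repaired when $\Re(W_k)\in\{0,S\mathbb{I}_d\}$. What is actually true is that, among the four settings, only $S={>}$ is finitely incompletable. Your $S={>}$ argument is therefore the entire valid content of the necessity part.

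Two smaller points:
\begin{itemize}
\item For non-degenerate $W_k$ you do not need a generic base point: when $0\neq\Re(W_k)\neq S\mathbb{I}_d$, the bound $0<\Tr[\Re(W_k)]/d<S$ already holds at $\pi_d$.
\item Your side claim that the closure of $\mathbb{E}_>[W_k]$ avoids a neighbourhood of $\mathcal{R}$ is false in general, since detected states can accumulate at a real $\sigma$ with $\Tr[W_k\sigma]=0$. You never use it, though.
\end{itemize}
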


\begin{proof}
The sufficiency can be observed from the proof of Theorem~\ref{completeness} that when  $S=0$ or $S$ = $\geqslant$, only finitely many witness operators are needed to detect all imaginary states; i.e., the imaginarity detection criterion is completable in these case.

For the necessity, we first show that for any positive real number $S$, $(\mathbb{W}_S, F_S, D_S)$ is finitely incompletable. For a finite subset $\mathcal{W}:=\{W_k\}_{k=1}^n$, let $\rho_\epsilon=(1-\epsilon)\pi_d+\epsilon\rho'$, where $\pi_d=\frac{\mathbb{I}_d}{d}$ is the maximally mixed state 
and $\rho'\in\mathcal{D}\setminus\mathcal{R}$ is some imaginary state. Denote $\eta_k\geqslant 0$ the smallest eigenvalue of the real part $\Re(W_k)$ of $W_k$ and 
$\xi=\max\limits_{1 \leqslant k\leqslant n}\{\Tr[W_k\pi_d]\}$, $M:=\max\limits_{1\leqslant k \leqslant n}\{\abs{\Tr[W_k\rho']}\}$. Let
\begin{equation*}
0 < \epsilon < \min\left\{\frac{\Tr[W_k\pi_d]-\eta_k}{M+\xi+1},\frac{S - \Tr[W_k\pi_d]}{M+\xi+1}\right\}.
\end{equation*}
Then
\begin{equation*}
\begin{aligned}
-M-\xi&\leqslant-\abs{\Tr[W_k\rho']}-\Tr[W_k\pi_d]\\
&\leqslant\Tr[W_k\rho'] - \Tr[W_k\pi_d]\\
&\leqslant\abs{\Tr[W_k\rho']}+\Tr[W_k\pi_d]\leqslant M+\xi.
\end{aligned}
\end{equation*}
Thus
\begin{equation*}
\begin{split}
\Tr[W_k\rho_\epsilon]&=(1-\epsilon)\Tr[W_k\pi_d]+\epsilon\Tr[W_k\rho'] \\
&=\epsilon\{\Tr[W_k\rho']-\Tr[W_k\pi_d]\} + \Tr[W_k\pi_d] \\
&>-(\Tr[W_k\pi_d]-\eta_k) + \Tr[W_k\pi_d] \\
&=\eta_k\geqslant 0
\end{split}
\end{equation*}
and
\begin{equation*}
\begin{split}
S - \Tr[W_k\rho_\epsilon]&=S-\Tr[W_k\pi_d]-\epsilon\{\Tr[W_k\rho']-\Tr[W_k\pi_d]\} \\
&>S-\Tr[W_k\pi_d]-(S-\Tr[W_k\pi_d]) \\
&=0,
\end{split}
\end{equation*}
which indicates that $\rho_\epsilon$ cannot be detected by any operator in the $\mathcal{W}$.

Now we show that $(\mathbb{W}_>, F_S, D_>)$ is also finitely incompletable. For any finite set $\mathcal{W}_> =\{W_k\}_{k=1}^n\subseteq \mathbb{W}_>$, we define $K=\max\limits_{1 \leqslant k \leqslant n}\{\abs{\Tr[W_k\rho']-\Tr[W_k\pi_d]}\}+1$ and $\zeta=\min\limits_{1 \leqslant k \leqslant n}\{\Tr[W_k\pi_d]\}$. If we set $0 < \epsilon < \min\left\{\dfrac{\zeta}{2K},1\right\}$, then $\Tr[W_k\rho_\epsilon]=\epsilon\{\Tr[W_k\rho']-\Tr[W_k\pi_d]\} + \Tr[W_k\pi_d]\geqslant\dfrac{\zeta}{2}>0$ for every $W_k$. That is, $\rho_\epsilon$ is a imaginary state whose imaginarity cannot be detected by the witnesses in $\mathcal{W}_>$.
\end{proof}

\begin{thm}
Let \(W_1\) and \(W_2\) be two imaginarity witnesses. The following assertions hold.

(1) Set $S$ = $>$ or $S$ = $\geqslant$. Then \(\mathbb{E}_S[W_1]=\mathbb{E}_S[W_2]\) if and only if there exists some \(r>0\) such that \(W_2 = r W_1\). More generally, \(\mathbb{E}_S[W_1]\subseteq \mathbb{E}_S[W_2]\) holds if and only if there exists a constant \(a>0\) and a positive semi-definite operator P such that \(W_1 = a W_2 + P\).

(2) Let $S$ be a positive real number. Then \(\mathbb{E}_S[W_1]=\mathbb{E}_S[W_2]\) if and only if either \(W_1=W_2\) or \(W_1+W_2 = S\mathbb{I}_d\).

(3) Set \(S=0\). Then \(\mathbb{E}_0[W_1]=\mathbb{E}_0[W_2]\) if and only if there exists a nonzero real number \(r\in\mathbb{R}\setminus\{0\}\) satisfying \(W_2 = r W_1\). Moreover, if no such nonzero real scalar r exists, then \(\mathbb{E}_0[W_1]\not\subseteq\mathbb{E}_0[W_2]\) and \(\mathbb{E}_0[W_2]\not\subseteq\mathbb{E}_0[W_1]\).
\end{thm}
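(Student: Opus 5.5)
Each part reduces to a statement about hyperplanes and half-spaces of the state space $\mathcal{D}$, viewed as a full-dimensional convex body in the affine space of unit-trace Hermitian operators. The basic fact I will use throughout is this: $\Tr[W\rho]$ is an affine functional of $\rho$, and because $\mathcal{D}$ contains $\pi_d$ in its relative interior, such a functional is determined by its restriction to $\mathcal{D}$. Two such functionals define the same open half-space $\{\rho\in\mathcal{D}\mid\Tr[W\rho]<0\}$, provided it is nonempty and its boundary hyperplane meets the interior of $\mathcal{D}$, only if they are positive multiples of each other modulo adding $c\,\mathbb{I}_d$. The normalization $\Tr\rho=1$ converts that freedom into the shift $W\mapsto W+c\mathbb{I}_d$.

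\textbf{Part (1).} For $S={>}$ or ${\geqslant}$, $\mathbb{E}_S[W]=\{\rho\in\mathcal{D}\mid\Tr[W\rho]<0\}$. For the inclusion statement I will use a Farkas/S-lemma type argument on the cone of positive operators. Sufficiency is immediate: if $W_1=aW_2+P$ and $\Tr[W_1\rho]<0$, then $\Tr[W_2\rho]<-\Tr[P\rho]/a\leqslant 0$. For necessity, assume $\Tr[W_2\rho]\geqslant 0\Rightarrow\Tr[W_1\rho]\geqslant 0$ on $\mathcal{D}$, that is, on the cone of positive semidefinite operators. Since $W_2$ has a negative eigenvalue, Slater's condition holds, and the S-lemma (equivalently, conic duality for $\min\Tr[W_1X]$ subject to $\Tr[W_2X]\geqslant0$, $X\geqslant0$) gives $a\geqslant0$ with $W_1-aW_2\geqslant 0$. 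I then rule out $a=0$: that would make $W_1$ positive semidefinite, contradicting that $W_1$ is a witness with $\mathbb{E}[W_1]\neq\emptyset$.

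The equality statement follows by applying the inclusion both ways: $W_1=aW_2+P$ and $W_2=bW_1+Q$ give $(1-ab)W_1=aQ+P$. A definiteness argument then forces $P=Q=0$: if $ab<1$ the right side is positive semidefinite while $W_1$ has a negative eigenvalue, and if $ab>1$ a similar contradiction uses $W_1$'s positive direction. The delicate case $ab=1$ is exactly the main obstacle, because there $P=-aQ$ only forces $P=Q=0$ when both are positive semidefinite. That actually works: $aQ+P=0$ with both terms positive semidefinite gives $P=Q=0$.

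\textbf{Part (2).} For positive $S$, $\mathbb{E}_S[W]$ is the union of two disjoint open slabs cut off by the parallel hyperplanes $\Tr[W\rho]=0$ and $\Tr[W\rho]=S$. Sufficiency follows from equation~\eqref{leftright}: replacing $W$ by $S\mathbb{I}_d-W$ swaps the two pieces, so $\mathbb{E}_S[W]$ is unchanged. For necessity, I compare boundaries. The complement $\{\rho\mid 0\leqslant\Tr[W\rho]\leqslant S\}$ is a slab whose relative boundary inside $\mathcal{D}$ consists of pieces of the two hyperplanes. Equal sets force equal boundary hyperplane pairs, and the level values $0$ and $S$ then pin down the affine functional up to the reflection $t\mapsto S-t$. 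This gives $W_1=W_2$ or $W_1=S\mathbb{I}_d-W_2$.

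The main care needed is to verify that both hyperplanes genuinely meet $\mathcal{D}$. This holds because real states realize values in $[0,S]$ and include $\pi_d$. When one side is empty, I restrict to the nonempty side and argue with a single hyperplane, together with the knowledge of $\Tr[W\sigma]$ on $\mathcal{R}$.

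\textbf{Part (3).} For $S=0$, $W$ has $\Re(W)=0$, so $\Tr[W\pi_d]=0$. The set $\mathbb{E}_0[W]=\{\rho\mid\Tr[W\rho]\neq0\}$ is the complement of the hyperplane $\ker W\cap\mathcal{D}$ through $\pi_d$. Equality of these complements means equal kernels of the linear functionals $X\mapsto\Tr[WX]$ on traceless Hermitian operators, hence $W_2=rW_1$ with $r\neq0$; here no $\mathbb{I}_d$ shift is possible since $\Tr W=0$. For the "moreover" claim, suppose no such $r$ exists. Then the two hyperplanes through $\pi_d$ are distinct, so each contains points not in the other. Small perturbations around $\pi_d$ stay in $\mathcal{D}$, which gives states with $\Tr[W_1\rho]=0\neq\Tr[W_2\rho]$ and vice versa, establishing both non-inclusions.
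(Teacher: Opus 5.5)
\textbf{Gap in part (2).} The step that fails is the one you flag as needing care. You claim both hyperplanes $\Tr[W\rho]=0$ and $\Tr[W\rho]=S$ meet $\mathcal{D}$ ``because real states realize values in $[0,S]$''. That is not correct. $F_S=[0,S]$ is a union over all of $\mathbb{W}_S$. For a fixed $W$, real states only give values in $[\lambda_{\min}(\Re(W)),\lambda_{\max}(\Re(W))]\subseteq[0,\norm{\Re(W)}_\infty]$, and $W\in\mathbb{W}_S$ only means $\norm{\Re(W)}_\infty\leqslant S$.

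Your boundary comparison needs the level-$S$ hyperplane to cut the interior of $\mathcal{D}$. That happens iff $\lambda_{\max}(W)>S$, i.e.\ iff $\mathbb{E}^r_S[W]\neq\emptyset$. (The level-$0$ hyperplane is fine, since $\Tr[W\pi_d]=\Tr[\Re(W)]/d\geqslant0$ while $W$ takes negative values on $\mathcal{D}$.) The cases then go as follows.
\begin{itemize}
\item If $\lambda_{\max}(W_i)>S$ for both witnesses, your argument is correct.
\item If it holds for exactly one of them, the two sets differ, because one is convex and the other is not.
\item If $\lambda_{\max}(W_1),\lambda_{\max}(W_2)\leqslant S$, then $\mathbb{E}_S[W_i]=\mathbb{E}_\geqslant[W_i]$, and your part (1) yields only $W_2=rW_1$ with $r>0$.
\end{itemize}
In the last case no argument can do better, because assertion (2) is false there. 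Take $d=2$, $W_1=\mathbb{I}_2+2\mathbf{i}(\kb{2}{1}-\kb{1}{2})$, so $\Re(W_1)=\mathbb{I}_2$ and the eigenvalues are $3,-1$. Take $W_2=2W_1$ and $S=10$. Both lie in $\mathbb{W}_{10}$, both $\mathbb{E}^r_{10}$ are empty, and $\mathbb{E}_{10}[W_1]=\mathbb{E}_{10}[W_2]$. Yet $W_1\neq W_2$ and $W_1+W_2\neq 10\,\mathbb{I}_2$.

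So your fallback (``restrict to the nonempty side and use $\Tr[W\sigma]$ on $\mathcal{R}$'') cannot be completed as stated. It works only under an extra hypothesis: either $\lambda_{\max}(W_i)>S$ for both, or $\norm{\Re(W_1)}_\infty=\norm{\Re(W_2)}_\infty=S$, in which case $W_2=rW_1$ forces $r=1$. The paper's proof tacitly makes the same assumption: it applies part (1) to $S\mathbb{I}_d-W_i$, which requires these operators to have negative eigenvalues.

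\textbf{Parts (1) and (3).} These are sound, and (1) takes a different route from the paper.
\begin{itemize}
\item Paper's route for (1): it proves equality first, by showing the zero sets $\mathcal{D}_{W_1}$ and $\mathcal{D}_{W_2}$ coincide and invoking Lemma 2 of Li et al. It then obtains the inclusion criterion by normalizing traces, citing Theorem 6 of Zhang et al., and splitting into four cases according to whether $\Tr W_1$ and $\Tr W_2$ vanish.
\item Your route for (1): you get the inclusion criterion in one stroke from conic duality on the positive cone, and deduce equality from double inclusion. This is more self-contained and avoids the trace case split.
\item Correction to your Slater step: strict feasibility of $\Tr[W_2X]\geqslant0$ needs some $X\succeq0$ with $\Tr[W_2X]>0$, i.e.\ a positive eigenvalue of $W_2$. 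This follows from $\Tr W_2=\Tr[\Re(W_2)]\geqslant0$ together with the negative eigenvalue, not from the negative eigenvalue alone. The same fact is what your $ab>1$ case needs.
\item Part (3): your kernel argument matches the paper's. Your perturbation around $\pi_d$ also proves the ``moreover'' non-inclusion clause, which the paper's proof does not address.
\end{itemize}
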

\begin{proof}

(1) It follows from $\mathbb{E}_S[W]\neq\emptyset$ that $W$ has negative eigenvalues. For any Hermitian operator $W\in\mathbb{H}$, we define $\mathcal{D}_W=\{\rho\in\mathcal{D}\mid\Tr[W\rho]=0\}$ and $\mathcal{M}_W=\{X\in\mathrm{Mat}_d(\mathbb{C})\mid\Tr[WX]=0\}$. Lemma 1 in Ref.~\cite{LiPRA032422} proves that if $W$ admits negative eigenvalues, there exists a state $\sigma$ satisfying $\Tr[W\sigma]=0$, which implies $\mathcal{D}_W\neq\emptyset$. Given the condition $\mathbb{E}_S[W_1]=\mathbb{E}_S[W_2]$, we claim that $\mathcal{D}_{W_1}=\mathcal{D}_{W_2}$. Otherwise, without loss of generality, assume there exists a state $\rho\in\mathcal{D}_{W_1}$ and $\rho\notin\mathcal{D}_{W_2}$, i.e., $\Tr[W_1\rho]=0$ and $\Tr[W_2\rho]\neq 0$. If $\Tr[W_2\rho]<0$, then $\rho\in\mathbb{E}_S[W_2]$ yet $\rho\notin\mathbb{E}_S[W_1]$, contradicting our premise. If $\Tr[W_2\rho]>0$, choose an arbitrary state $\delta\in\mathbb{E}_S[W_1]$ which satisfies $\Tr[W_1\delta]<0$ and $\Tr[W_2\delta]<0$. Consider the convex combination $\rho_\epsilon=(1-\epsilon)\rho+\epsilon\delta$, $\epsilon\in(0,1)$. For sufficiently small $\epsilon$, 
$\Tr[W_1\rho_\epsilon]=\epsilon\Tr[W_1\delta]<0$ and $\Tr[W_2\rho_\epsilon]=(1-\epsilon)\Tr[W_2\rho]+\epsilon\Tr[W_2\delta]=\Tr[W_2\rho]-\epsilon\{\Tr[W_2\rho]-\Tr[W_2\delta]\}>0$. This yields $\rho_\epsilon\in\mathbb{E}_S[W_1]$ but $\rho_\epsilon\notin\mathbb{E}_S[W_2]$, another contradiction. We therefore conclude $\mathcal{D}_{W_1}=\mathcal{D}_{W_2}$. It thus follows from Lemma 2 in Ref.~\cite{LiPRA032422} that $\mathcal{M}_{W_1}=\mathcal{M}_{W_2}$. By definition, $\mathcal{M}_{W_1}$ and $\mathcal{M}_{W_2}$ are the orthogonal complement spaces associated with $W_1$ and $W_2$, respectively, each of dimension $d^2-1$. This implies $W_2=rW_1$, and one can further verify that $r$ is a positive real number. The other direction of the first statement is straightforward.

We now proceed to prove the second statement. Assume $\mathbb{E}_S[W_1]\subseteq\mathbb{E}_S[W_2]$. Set $t_1=\Tr[W_1]=\Tr[\Re(W_1)]$ and $t_2=\Tr[W_2]=\Tr[\Re(W_2)]$. Clearly, $t_1,t_2\geqslant 0$ as $W_1,W_2\in\mathbb{W}_S$. We prove the conclusion according to the following four cases.

(i) $t_1,t_2>0$. Notice that $\mathbb{E}_S[W_1]=\mathbb{E}_S\left[\frac{W_1}{t_1}\right]$ and $\mathbb{E}_S[W_2]=\mathbb{E}_S\left[\frac{W_2}{t_2}\right]$. Theorem 6 from Ref.~\cite{zhangPLA130135} guarantees that there exists a real number $0\leqslant\epsilon<1$ and a positive semi-definite operator $Q$ such that $\frac{W_1}{t_1}=(1-\epsilon)\frac{W_2}{t_2}+\epsilon Q$. Hence, $W_1=\frac{(1-\epsilon)t_2}{t_1}W_2+t_1\epsilon Q$. Setting $\frac{(1-\epsilon)t_2}{t_1}=a$ and $t_1\epsilon Q=P$ we complete the proof.

(ii) $t_1>0,t_2=0$. It is straightforward to verify that $\mathbb{E}_S[W_1]\subseteq\mathbb{E}_S[W_1+W_2]\subseteq\mathbb{E}_S[W_2]$. Analogously to case (i), we have $\frac{W_1}{t_1}=(1-\epsilon)\frac{W_1+W_2}{t_1}+\epsilon Q$, $\epsilon\in[0,1)$. Clearly, $\epsilon\neq 0$; otherwise, $W_2=\mathbf{0}$ is not a witness operator. Setting $\frac{1-\epsilon)}{\epsilon}=a$ and $t_1\epsilon Q=P$ we complete the proof.

(iii) $t_1=0,t_2>0$. From the chain of inclusions $\mathbb{E}_S[W_1]\subseteq\mathbb{E}_S[W_1+W_2]\subseteq\mathbb{E}_S[W_2]$, we obtain $\frac{W_1+W_2}{t_2}=(1-\epsilon)\frac{W_2}{t_2}+\epsilon Q$, $\epsilon\in[0,1)$. Similarly, if $\epsilon=0$, then $W_1=\mathbf{0}$ cannot be a witness. Hence, $\epsilon\neq 0$. Thus, $W_1+\epsilon W_2=t_2\epsilon Q$ and $\mathbb{E}_S[W_1]\subseteq\mathbb{E}_S[W_1+\epsilon W_2]\subseteq\mathbb{E}_S[W_2]$. On the other hand, $\mathbb{E}_S[W_1+\epsilon W_2]=\mathbb{E}_S[t_2\epsilon Q]=\emptyset$, which leads to a contradiction and this case cannot occur.

(iv) $t_1=t_2=0$. We demonstrate that the inclusion relation holds only if $W_2=aW_1$ for some scalar $a>0$. In fact, for small enough $\epsilon>0$, the chain of inclusions $\mathbb{E}_S[W_1+\epsilon\mathbb{I}_d]\subseteq\mathbb{E}_S[W_1]\subseteq\mathbb{E}_S[W_2]$ holds. Applying the reasoning established in case (ii), there exists a positive real number $a_{\epsilon}>0$ and a positive semi-definite $P_{\epsilon}$ satisfying $W_1+\epsilon\mathbb{I}_d=a_{\epsilon}W_2+P_{\epsilon}$. This identity enforces all diagonal entries of $P_{\epsilon}$ to be $\epsilon$. Additionally, since $P_{\epsilon}$  is positive semi-definite, the module of each off-diagonal entry is bounded above by $\epsilon$. Taking the limit $\epsilon\rightarrow0$, the operator $P_{\epsilon}$ converges to the zero operator, from which we conclude $W_2=aW_1$.

For the converse direction, suppose there exists a scalar $a>0$ and a positive semi-definite operator $P$ such that $W_1=aW_2+P$. Take any $\rho\in\mathbb{E}_S[W_1]$, we have $\Tr[W_1\rho]=a\Tr[W_2\rho]+\Tr[P\rho]<0$. Since $P$ is positive semi-definite, we always have $\Tr[P\rho]\geqslant 0$. This forces $\Tr[W_2\rho]<0$, which yields $\rho\in\mathbb{E}_S[W_2]$. Hence $\mathbb{E}_S[W_1]\subseteq\mathbb{E}_S[W_2]$.

(2) From Eq.~(\ref{leftright}) we have $\mathbb{E}_S[W_1] = \mathbb{E}_\geqslant[W_1] \bigcup\mathbb{E}_\geqslant[S\mathbb{I}_d-W_1]$ and $\mathbb{E}_S[W_2] = \mathbb{E}_\geqslant[W_2] \bigcup \mathbb{E}_\geqslant[S\mathbb{I}_d-W_2]$. Case 1: $\mathbb{E}_\geqslant[W_1]=\mathbb{E}_\geqslant[W_2]$ and $\mathbb{E}_\geqslant[S\mathbb{I}_d-W_1]=\mathbb{E}_\geqslant[S\mathbb{I}_d-W_2]$. In this case, it follows from the statement (1) that there exist scalars $r_1,r_2>0$ such that $W_2=r_1W_1$ and $S\mathbb{I}_d-W_2=r_2(S\mathbb{I}_d-W_1)$. This yields $r_1=r_2=r$, and consequently $W_2=rW_1$ with $r>0$. Case 2: $\mathbb{E}_\geqslant[W_1]=\mathbb{E}_\geqslant[S\mathbb{I}_d-W_2]$ and $\mathbb{E}_\geqslant[W_2]=\mathbb{E}_\geqslant[S\mathbb{I}_d-W_1]$. Then there exist scalars $r_1,r_2>0$ such that $S\mathbb{I}_d-W_2=r_1W_1$ and $S\mathbb{I}_d-W_1=r_2W_2$, i.e.
\begin{equation}\label{W1+W2}
\begin{cases}
r_1 W_1 + W_2 & = S\mathbb{I}_d, \\
W_1 + r_2 W_2 & = S\mathbb{I}_d.
\end{cases}
\end{equation}
If $r_1 r_2 \neq 1$, we obtain
\begin{equation*}
\begin{cases}
W_1 = \dfrac{r_2-1}{r_1r_2-1}S\mathbb{I}_d, \\
W_2 = \dfrac{r_1-1}{r_1r_2-1}S\mathbb{I}_d.
\end{cases}
\end{equation*}
In this case, $W_1$ and $W_2$ fail to be valid witness operators, which leads to a contradiction. Hence $r_1 r_2=1$. Substituting this relation into Eq.~(\ref{W1+W2}) yields $r_1=r_2=1$, and consequently $W_1 + W_2 = S\mathbb{I}_d$.

(3) On the one hand, if $\mathbb{E}_0[W_1]=\mathbb{E}_0[W_2]$, we conclude that $\mathcal{D}_{W_1}=\mathcal{D}_{W_2}$. Following the argument in (1), there exists a non-zero real number $r$ such that $W_2=rW_1$. On the other hand, suppose $W_2=rW_1$ with $r\neq 0$, we always have $\Tr[W_1\rho]=\neq 0$ if and only if $\Tr[W_2\rho]\neq 0$. Hence $\mathbb{E}_0[W_1]=\mathbb{E}_0[W_2]$.
\end{proof}

\section{Relation to Coherence Witnessing}\label{sect4}

We compare the imaginarity witnesses with coherence witnesses. A coherence witness $W$ distinguishes diagonal density matrices from non-diagonal ones.  
We indicate that the imaginarity witnesses have quite different properties from the coherence witnesses.  In Ref.~\cite{LiPRA032422} Li \textit{et al.} proved in Lemma 1 
that if the trace \(\operatorname{Tr}[W]\) of the observable \(W\) serves as prior knowledge for coherence detection, \(W\) necessarily admits negative eigenvalues. 
By contrast, Example 1 demonstrates that, for imaginarity detection, $W$ may be positive semi‑definite when the real‑part spectral norm $\|\Re(W)\|_{\infty}$ is taken as prior knowledge about $W$.

Li \textit{et al.} \cite{LiPRA032422} proved that the expectation value of an arbitrary Hermitian operator W over all incoherent states \(\delta\) obeys the following upper bound.,
\begin{equation}
0 \leqslant \Tr[W\delta] \leqslant \Tr[W].
\end{equation}
Furthermore, Zhu \textit{et al}. \cite{zhuJPA455208} derived both lower and upper bounds for the expectation value of a Hermitian operator $W$ over all incoherent states $\delta$,
\begin{equation}
\mu_{\min} \leqslant \Tr[W\delta] \leqslant \mu_{\max},
\end{equation}
where $\mu_{\min}$ and $\mu_{\max}$ are the minimum and maximum diagonal entries of the Hermitian operator $W$, respectively.

The $\norm{\Re(W)}_\infty$ and $\Tr[W]$ has the following quantitative relation,
\begin{equation*}
\Tr[W]=\Tr[\Re(W)]\geqslant\norm{\Re(W)}_\infty.
\end{equation*}
Let $\Re(W)=Q\Lambda Q^\top$ be the spectral decomposition of $\Re(W)$, where $\Lambda=\text{diag}\{\lambda_1,\lambda_2,\dots,\lambda_d\}$. Then $[\Re(W)]_{kk}=\sum\limits_{m=1}^d Q_{km}^2\lambda_m\leqslant\norm{\Re(W)}_\infty\sum\limits_{m=1}^d Q_{km}^2=\norm{\Re(W)}_\infty$, where $[\Re(W)]_{kk}$ denotes the $k$-th diagonal entry of $\Re(W)$. Hence $\mu_{\max}\leqslant\norm{\Re(W)}_\infty$. To sum up, we obtain $0 \leqslant \mu_{\max} \leqslant \norm{\Re(W)}_\infty \leqslant \Tr[W]$. We plot these bounds along the horizontal axis in the FIG.~\ref{fig3}.

\begin{figure}[htbp]
    \centering
    \includegraphics[width=0.4\textwidth]{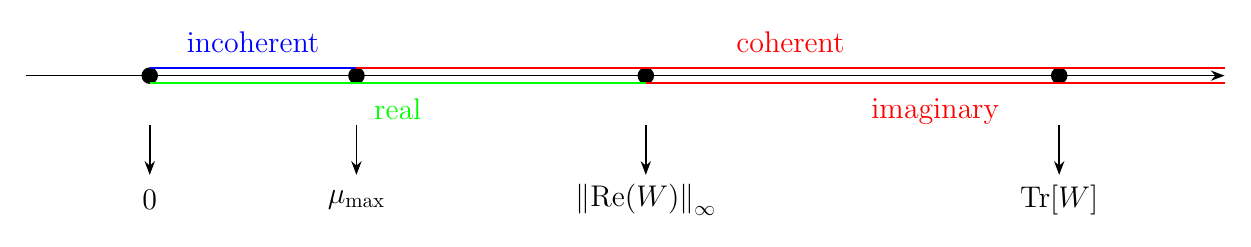}
    \caption{Relation between witnessing imaginarity and witnessing coherence using prior knowledge of observables.}
    \label{fig3}
\end{figure}

One observes that the expectation value of the Hermitian operator $W$ over any incoherent state falls within the interval $[0,\mu_{\max}]$, which corresponds to the blue line segment in the figure. If the expectation value $\Tr[W\rho]$ of $W$ evaluated on a quantum state $\rho$ lies in $(\mu_{\max},\infty)$, then $\rho$ is coherent, as indicated by the red region above the horizontal axis. Analogously, for quantum imaginarity, the expectation value of the Hermitian operator $W$ over all real quantum states is confined to $[0,\norm{\Re(W)}_\infty]$, marked by the green segment in the plot. When $\Tr[W\rho] \in (\norm{\Re(W)}_\infty,\infty)$, the quantum state $\rho$ has imaginarity, corresponding to the red region below the horizontal axis.

\section{CONCLUSION}\label{sect5}

We systematically investigate the fundamental properties of quantum imaginarity detection. We derive a rigorous analytical bound, which demonstrates that the maximal expectation value of an arbitrary imaginarity witness over all real quantum states is constrained by the spectral norm of the real part of the witness observable. This result unveils the intrinsic connection between the spectral characteristic of observables and the detection capability of witness operators, and provides a quantitative criterion to evaluate and optimize imaginarity detection schemes. Based on the derived spectral norm bound, we have further classified imaginarity witness operators into four exclusive categories and conduct a comprehensive comparative analysis of their core properties. We have clarified the completeness and finite completeness of each witness class, derived explicitly the universal conditions for different witness operators to achieve joint detection of common imaginary states,
and provided consistent identification of identical imaginary states. 

Recently, in  Ref. \cite{Liang04763} Liang \textit{et al}. redefined imaginarity witnesses and derived that the expectation value $\Tr[W\sigma]$ of a Hermitian operator $W$ over all real states $\sigma$ is bounded by the minimum eigenvalue $\lambda_{\min}(\Re(W))$  and the maximum eigenvalue $\lambda_{\max}(\Re(W))$ of the real part of $W$. Accordingly, a quantum state $\rho$ is an imaginary state if $\Tr[W\rho]\notin[\lambda_{\min}(\Re(W)),\lambda_{\max}(\Re(W))]$. This definition extends the scope of imaginarity witnesses, in contrast to the usual definition in which the real part $\Re(W)$ of $W$ is required to be positive semi‑definite. The imaginarity detection is improved by modifying the original framework, whereas we enhanced the detection capability by using the prior knowledge of the witness operators. It would also be appealing to extend our scheme to deal with other cases like entanglement witness. Moreover, our work lays a foundation for further exploring the operational value of quantum imaginarity as an independent quantum resource and promotes the practical exploitation of imaginarity-based advantages in quantum metrology, quantum cryptography and quantum information processing technologies.

\section*{Acknowledgments:}
S. M. Fei acknowledges the financial support from specific research fund of the
Innovation Platform for Academicians of Hainan Province.

\section*{Data availability}
No data were created or analyzed in this study.

\end{document}